\theoremstyle{plain}
\newtheorem{theorem}{Theorem}[section]
\newtheorem{lemma}[theorem]{Lemma}
\newtheorem{proposition}[theorem]{Proposition}
\theoremstyle{definition}
\theoremstyle{remark}
\newtheorem{remark}[theorem]{Remark}
\numberwithin{equation}{section}
\begin{document}
\title{Differential-algebraic integrability analysis of the generalized
Riemann type and Korteweg-de Vries hydrodynamical equations}
\author{Anatoliy K. Prykarpatsky*), Orest D. Artemovych**), \\ Ziemowit Popowicz***), Maxim V. Pavlov$\ddagger $)}
\address{*) The Department of Mining Geodesy at the AGH University of
Science and Technology, Crac\'{o}w 30059, Poland, and the Ivan Franko State
Pedagogical University, Drohobych, Lviv region, Ukraine; pryk.anat@ua.fm}
\address{**) The Department of Algebra and Topology at the Faculty of
Mathematics and Informatics of the Vasyl Stefanyk Pre-Carpathian National
University, Ivano-Frankivsk, Ukraine, and the Institute of Mathematics and
Informatics at the Tadeusz Kosciuszko University of Technology, Crac\'{o}w,
Poland; artemo@usk.pk.edu.pl}
\address{***) The Institute of Theoretical Physics, University of Wroc\l aw,
Poland; ziemek@ift.uni.wroc.pl }
\address{$\ddagger $) The Department of Mathematical Physics, P.N. Lebedev
Physical Institute, 53 Leninskij Prospekt, Moscow 119991, Russia;
M.V.Pavlov@lboro.ac.uk }
\subjclass{35A30, 35G25, 35N10, 37K35, 58J70,58J72, 34A34; PACS: 02.30.Jr,
02.30.Hq}
\keywords{differential rings, differentiations, Lax type differential ideal,
invariance, Lax type representation, Riemann type generalized hydrodynamical
equation, Korteweg- de Vries equation}
\maketitle

\begin{abstract}
A differential-algebraic approach to studying the Lax type integrability of
the generalized Riemann type hydrodynamic equations at $N=3,4$ is devised.
The approach is also applied to studying the Lax type integrability of the
well known Korteweg-de Vries dynamical system.
\end{abstract}

\section{Introduction}

Nonlinear hydrodynamic equations are \ of constant interest since the
classical works by B. Riemann in the general three-dimensional case, having
paid special attention to their one-dimensional spatial reduction, for which
he devised the generalized method of characteristics and Riemann invariants.
\ These methods appeared to be very effective \cite{Wh,PBB} in investigating
many types of nonlinear spatially one-dimensional systems of hydrodynamical
type and, in particular, the characteristics method in the form of a
"reciprocal" transformation of variables has been \ used recently in
studying the so called Gurevich-Zybin system \cite{GZ,GZ1} in \cite{Pav} and
the Whitham type system in \cite{PrPryt,GBPPP,Sak}. Moreover, this method
was further effectively applied to studying solutions to a generalized \cite%
{GPPP} (owing to D. Holm and M. Pavlov) Riemann type hydrodynamical system
\begin{equation}
D_{t}^{N}u=0,\text{ \ \ }D_{t}:=\partial /\partial t+u\partial /\partial x,%
\text{ \ }N\in \mathbb{Z}_{+},  \label{H0a}
\end{equation}%
where $u\in C^{\infty }(\mathbb{R}^{2};\mathbb{R})$ is a smooth function.
The case $N=2$ was recently analyzed in detail in \cite{GBPPP,GPPP} making
use of the standard symplectic theory techniques. In particular, there was
demonstrated that the Riemann type hydrodynamical system (\ref{H0a}) at $%
N=2, $ looking upon putting $z:=D_{t}u$ equivalently as
\begin{equation}
\left.
\begin{array}{c}
u_{t}=z-uu_{x} \\
z_{t}=-uz_{x}%
\end{array}%
\right\} ,  \label{H0b}
\end{equation}%
allows the following Lax type representation
\begin{equation}
\begin{array}{c}
\partial f/\partial x=\ell \lbrack u,z;\lambda ]f,\text{ \ \ }\partial
f/\partial t=p(\ell )f,\text{ \ \ \ \ }p(\ell ):=-u\ell \lbrack u,z;\lambda
]+q(\lambda ), \\
\text{\ }\ell \lbrack u,z;\lambda ]:=\left(
\begin{array}{cc}
-\lambda u_{x} & -z_{x} \\
2\lambda ^{2} & \lambda u_{x}%
\end{array}%
\right) ,\text{ \ }q(\lambda ):=\left(
\begin{array}{cc}
0 & 0 \\
-\lambda & 0%
\end{array}%
\right) ,\text{\ } \\
p(\ell )=\left(
\begin{array}{cc}
\lambda u_{x}u & z_{x}u \\
-\lambda -2\lambda ^{2}u & -\lambda u_{x}u%
\end{array}%
\right) ,%
\end{array}
\label{H0c}
\end{equation}%
where $f\in C^{(\infty )}(\mathbb{R}^{2};\mathbb{C}^{2})$ \ and $\lambda \in
\mathbb{C}$ is an arbitrary spectral parameter. \ Making use of a method
devised in \cite{PM,MBPS,HPP} and based on the spectral theory and related
very complicated symplectic theory relationships in \cite{GPPP,GBPPP,PoP}
the corresponding Lax type representations for the cases $N=3,4$ were
constructed in explicit form.

In this work a new \ and very simple differential-algebraic approach to
studying the Lax type integrability of the generalized Riemann type
hydrodynamic equations at $N=3,4$ is devised. It can be easily generalized
for treating the problem for arbitrary integers $N\in \mathbb{Z}_{+}.$ The
approach is also applied to studying the Lax type integrability of the well
known Korteweg-de Vries dynamical system.

\section{ The differential-algebraic description of the Lax type
integrability of the generalized Riemann type hydrodynamical equation at N=3
and 4}

\subsection{The differential-algebraic preliminaries}

Take the ring $\mathcal{K}:=\mathbb{R}\{\{x,t\}\},$ $(x,t)\in \mathbb{R}%
^{2}, $ of convergent germs of real-valued smooth functions from $C^{(\infty
)}(\mathbb{R}^{2};\mathbb{R})$ and construct \cite{Rit,Ko,Wei,CH} the
associated differential polynomial ring $\mathcal{K}\{u\}:=\mathcal{K}%
[\Theta u]$ with respect to a functional variable $u,$ where $\Theta $
denotes the standard monoid of all operators generated by commuting
differentiations $\partial /\partial x:=D_{x}$ and $\partial /\partial t.$
The ideal $I\mathcal{\{}u\mathcal{\}}\subset \mathcal{K}\{u\}$ is called
\cite{Rit,Ko} differential if the condition $I\mathcal{\{}u\mathcal{\}}%
=\Theta I\mathcal{\{}u\mathcal{\}}$ holds.

Consider now the additional differentiation
\begin{equation}
D_{t}:\mathcal{K}\{u\}\rightarrow \mathcal{K}\{u\},  \label{A0}
\end{equation}%
depending on the functional variable $u,$ which satisfies the Lie-algebraic
commutator condition
\begin{equation}
\lbrack D_{x},D_{t}]=(D_{x}u)D_{x},  \label{A1}
\end{equation}%
for all $(x,t)\in \mathbb{R}^{2}.$ As a simple consequence of (\ref{A1}) the
following general (suitably normalized) \textit{representation }of the
differentiation (\ref{A0}) \
\begin{equation}
D_{t}=\partial /\partial t+u\partial /\partial x  \label{A1-0}
\end{equation}%
in the differential ring $\ \mathcal{K}\{u\}$ holds. \ Impose now on the
differentiation (\ref{A0}) a new algebraic constraint
\begin{equation}
D_{t}^{N}u=0,  \label{A1-1}
\end{equation}%
defining some smooth functional set (or "\textit{manifold}") $\mathcal{M}%
^{(N)}$ of functions $u\in \mathbb{R}\{\{x,t\}\},$ and which allows to
reduce naturally the initial ring $\mathcal{K}\{u\}$ to the basic ring $%
\mathcal{K}\{u\}|_{\mathcal{M}_{(N)}}\subseteq \mathbb{R}\{\{x,t\}\}.$ In
this case the following natural problem of constructing the corresponding
representation of differentiation (\ref{A0}) arises: \textit{to find an
equivalent linear representation of the reduced differentiation }$D_{t}|_{%
\mathcal{M}_{(N)}}:\mathbb{R}^{p(N)}\{\{x,t\}\}\rightarrow \mathbb{R}%
^{p(N)}\{\{x,t\}\}$\textit{\ in the functional vector space }$\mathbb{R}%
^{p(N)}\{\{x,t\}\}$\textit{\ for some specially chosen integer dimension }$%
p(N)\in \mathbb{Z}_{+}.$

As it will be shown below for the cases $N=3$ and $N=4,$ this problem is
completely analytically solvable, giving rise to the corresponding Lax type
integrability of the generalized Riemann type hydrodynamical system (\ref%
{H0a}). Moreover, the same problem is also solvable for the more complicated
constraint
\begin{equation}
D_{t}u-D_{x}^{3}u=0,  \label{A1-2}
\end{equation}%
equivalent to the well known Lax type integrable nonlinear Korteweg-de Vries
dynamical system.

\subsection{The generalized Riemann type hydrodynamical equation: the case
N=3}

To proceed with analyzing the above formulated representation problem for
the generalized Riemann type equation (\ref{A1-1}) at $N=3,$ we first
construct an adjoint to $\ $the differential ring $\mathcal{K}\{u\}$ and
invariant with respect to differentiation (\ref{A1-0}) so called "Riemann
differential ideal" $\ \ R\mathcal{\{}u\mathcal{\}}\subset \mathcal{K}\{u\}$
as
\begin{eqnarray}
R\mathcal{\{}u\mathcal{\}} &:&=\{\lambda \sum_{n\in \mathbb{Z}%
_{+}}f_{n}^{(1)}D_{x}^{n}u-\sum_{n\in \mathbb{Z}%
_{+}}f_{n}^{(2)}D_{t}D_{x}^{n}u+\sum_{n\in \mathbb{Z}%
_{+}}f_{n}^{(3)}D_{t}^{2}D_{x}^{n}u:D_{t}^{3}u=0,\text{ \ }  \notag \\
f_{n}^{(k)} &\in &\mathcal{K}\{u\},k=\overline{1,3},n\in \mathbb{Z}%
_{+}\}\subset \mathcal{K}\{u\},  \label{A2a}
\end{eqnarray}%
where $\lambda \in \mathbb{R}$ is an arbitrary parameter, and formulate the
following simple but important lemma.

\begin{lemma}
\label{Lm_1}The kernel $Ker$ $D_{t}\subset {R}\mathcal{\{}u\mathcal{\}}$ of
the differentiation $D_{t}:\mathcal{K}\{u\}\rightarrow \mathcal{K}\{u\},$
reduced modulo the Riemann differential ideal $R\mathcal{\{}u\mathcal{\}}$ $%
\subset \mathcal{K}\{u\},$ is generated by elements satisfying the following
linear functional-differential relationships:%
\begin{equation}
D_{t}f^{(1)}=0,\text{ \ \ }D_{t}f^{(2)}=\lambda f^{(1)},\text{ \ }%
D_{t}f^{(3)}=f^{(2)},  \label{A3}
\end{equation}%
where, by definition, $\ f^{(k)}:=f^{(k)}(\lambda )=\sum_{n\in \mathbb{Z}%
_{+}}f_{n}^{(k)}\lambda ^{n}\in \mathcal{K}\{u\}|_{\mathcal{M}_{(3)}}=%
\mathbb{R}{\{\{}x,t\}\},$ $k=\overline{1,3},$ and $\lambda \in \mathbb{R}$
is arbitrary.
\end{lemma}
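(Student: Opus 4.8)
The plan is to prove the lemma by a direct computation: take an arbitrary $\varphi\in R\{u\}$ written in the canonical form (\ref{A2a}), apply $D_{t}$, reduce the result back into that form, and read off the conditions under which $D_{t}\varphi$ is trivial. The preparatory ingredient is the commutator relation (\ref{A1}), i.e.\ $D_{t}D_{x}=D_{x}D_{t}-(D_{x}u)D_{x}$. Iterating it on the generators $D_{x}^{n}u$, $D_{t}D_{x}^{n}u$, $D_{t}^{2}D_{x}^{n}u$ of $R\{u\}$ gives, for $n\geq1$, $D_{t}D_{x}^{n}u=D_{x}(D_{t}D_{x}^{n-1}u)-(D_{x}u)D_{x}^{n}u$ and, after two more applications, a recursion of the form
\[
D_{t}^{3}D_{x}^{n}u=D_{x}(D_{t}^{3}D_{x}^{n-1}u)-3(D_{x}u)D_{t}^{2}D_{x}^{n}u-3(D_{x}D_{t}u)D_{t}D_{x}^{n}u-(D_{x}D_{t}^{2}u)D_{x}^{n}u ,
\]
whose starting term is $D_{t}^{3}D_{x}^{0}u=D_{t}^{3}u$. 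Invoking the constraint (\ref{A1-1}) at $N=3$, i.e.\ $D_{t}^{3}u=0$, an induction on $n$ shows that every $D_{t}^{3}D_{x}^{n}u$ lies in the $\mathcal{K}\{u\}$-span of $\{D_{x}^{m}u,\ D_{t}D_{x}^{m}u,\ D_{t}^{2}D_{x}^{m}u:m\le n\}$, i.e.\ in $R\{u\}$; in particular $R\{u\}$ is stable under $D_{t}$ (and, similarly, under $D_{x}$), so the passage to $D_{t}$ reduced modulo $R\{u\}$ is meaningful.

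Next I would apply the Leibniz rule to $\varphi$ and regroup according to whether $D_{t}$ acts on a coefficient $f_{n}^{(k)}$ or on a generator, obtaining
\[
D_{t}\varphi=\lambda\sum_{n}(D_{t}f_{n}^{(1)})D_{x}^{n}u+\sum_{n}(\lambda f_{n}^{(1)}-D_{t}f_{n}^{(2)})D_{t}D_{x}^{n}u+\sum_{n}(D_{t}f_{n}^{(3)}-f_{n}^{(2)})D_{t}^{2}D_{x}^{n}u+\sum_{n}f_{n}^{(3)}D_{t}^{3}D_{x}^{n}u .
\]
By the previous step the last sum already lies in $R\{u\}$ (its $n=0$ summand vanishing outright, since $D_{t}^{3}u=0$), so, after rewriting it through the $D_{t}^{3}D_{x}^{n}u$ recursion, $D_{t}\varphi$ is once more in canonical shape. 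Setting $D_{t}\varphi=0$ and invoking the functional independence on $\mathcal{M}^{(3)}$ of the monomials $D_{x}^{n}u$, $D_{t}D_{x}^{n}u$, $D_{t}^{2}D_{x}^{n}u$ — the Cauchy data $u,\ D_{t}u,\ D_{t}^{2}u$ together with all their $x$-derivatives being free there — one equates the coefficient of each jet monomial to zero; this yields a linear system on the $f_{n}^{(k)}$ which, upon forming $f^{(k)}(\lambda)=\sum_{n}f_{n}^{(k)}\lambda^{n}$, is exactly (\ref{A3}), the factor $\lambda$ in $D_{t}f^{(2)}=\lambda f^{(1)}$ originating from the explicit $\lambda$ multiplying the first term of $\varphi$. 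Conversely, if $f^{(1)},f^{(2)},f^{(3)}$ satisfy (\ref{A3}) then the first three sums above cancel and what remains of $D_{t}\varphi$ sits in $R\{u\}$, so the corresponding $\varphi$ represents an element of the reduced kernel; hence that kernel is generated precisely by such $\varphi$.

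The Leibniz expansion and the regrouping are mechanical; the step I expect to be the genuine obstacle is the combination of the $D_{t}^{3}D_{x}^{n}u$ recursion with the reduction bookkeeping — verifying that the single algebraic constraint $D_{t}^{3}u=0$ is enough to drive the induction placing all of these terms inside $R\{u\}$, and making the functional-independence claim on $\mathcal{M}^{(3)}$ precise enough to license equating the jet coefficients one by one. These are also exactly the points at which one must fix the precise meaning of ``$D_{t}$ reduced modulo $R\{u\}$''.
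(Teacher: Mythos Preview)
The paper gives no proof of this lemma; it is announced as ``simple but important'' and then immediately used to pass to the matrix form (\ref{A4}). Your direct computation --- applying $D_{t}$ via Leibniz to a generic element of $R\{u\}$, regrouping by jet type, and reading off the vanishing conditions --- is the natural argument the paper suppresses, and your displayed identity for $D_{t}\varphi$ is correct. For the $n=0$ summand the residual term $f_{0}^{(3)}D_{t}^{3}u$ vanishes outright by the constraint $D_{t}^{3}u=0$, and equating coefficients of $u,\,D_{t}u,\,D_{t}^{2}u$ gives exactly (\ref{A3}); this is the content the authors have in mind and is all that is actually used downstream.

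Your caution about the $n\geq 1$ contributions $D_{t}^{3}D_{x}^{n}u$ is well-placed: once rewritten via your commutator recursion they feed extra $\mathcal{K}\{u\}$-coefficients back into the system, so the relations on the individual $f_{n}^{(k)}$ are no longer literally (\ref{A3}), and forming the generating functions $f^{(k)}(\lambda)=\sum_{n}f_{n}^{(k)}\lambda^{n}$ does not by itself absorb the discrepancy. This reflects a genuine imprecision in the paper's formulation rather than a defect in your argument --- the paper leaves the meaning of ``reduced modulo $R\{u\}$'' and the role of the $\lambda$-packaging vague, which is exactly the ambiguity you flag at the end. Your proposal is thus at least as complete as what the paper offers, and more honest about where the looseness lies.
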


It is easy to see that equations (\ref{A3}) can be equivalently rewritten
both in the matrix form as
\begin{equation}
D_{t}f=q(\lambda )f,\text{ \ \ }q(\lambda ):=\left(
\begin{array}{ccc}
0 & 0 & 0 \\
\lambda & 0 & 0 \\
0 & 1 & 0%
\end{array}%
\right) ,  \label{A4}
\end{equation}%
where $f:=(f^{(1)},f^{(2)},f^{(3)})^{\intercal }\in \mathcal{K}^{3}\{u\}|_{%
\mathcal{M}_{(3)}},$ $\lambda \in \mathbb{R}$ is an arbitrary "spectral"
parameter, and in the compact scalar form as
\begin{equation}
D_{t}^{3}f_{3}=0  \label{A5}
\end{equation}%
for an element $f_{3}\in \mathcal{K}\{u\}|_{\mathcal{M}_{(3)}}.$ Here it is
worth to \ note that the Riemann differential ideal (\ref{A2a}), satisfying
the $D_{t}$-invariance condition, is in this case maximal. Now we can
construct by means of relationship (\ref{A5}) a new invariant, the so-called
"Lax differential ideal" $\ L\mathcal{\{}u\mathcal{\}}\subset \mathcal{K}%
\{u\},$ isomorphic to the Riemann differential ideal $R\mathcal{\{}u\mathcal{%
\}}\subset \mathcal{K}\{u\}$ and realizing the Lax type integrability
condition of the Riemann type hydrodynamical equation (\ref{H0a}). Namely,
based on the result of Lemma \ref{Lm_1} the following proposition holds.

\begin{proposition}
The expression (\ref{A4}) is an adjoint linear matrix representation in the
space $\mathbb{R}^{3}\mathbb{\{\{}x,t\}\}$ of the differentiation $D_{t}:%
\mathcal{K}\{u\}\rightarrow \mathcal{K}\{u\},$ reduced to the ideal $R%
\mathcal{\{}u\mathcal{\}}\subset \mathcal{K}\{u\}.$ The related $D_{x}$- and
$D_{t}$-invariant Lax differential ideal $L\mathcal{\{}u\mathcal{\}}\subset
\mathcal{K}\{u\},\ $which is isomorphic to the invariant Riemann
differential ideal $R\mathcal{\{}u\mathcal{\}}\subset \mathcal{K}\{u\},$ is
generated by the element $f_{3}(\lambda )\in \mathcal{K}\{u\},\lambda \in
\mathbb{R},$ satisfying condition (\ref{A5}), and equals%
\begin{eqnarray}
L\mathcal{\{}u\mathcal{\}} &:&=\{g_{1}f_{3}(\lambda
)+g_{2}D_{t}f_{3}(\lambda )+g_{3}D_{t}^{2}f_{3}(\lambda
):D_{t}^{3}f_{3}(\lambda )=0,  \notag \\
\lambda &\in &\mathbb{R},g_{j}\in \mathcal{K}\{u\},j=\overline{1,3}\}\subset
\mathcal{K}\{u\}.  \label{A6}
\end{eqnarray}
\end{proposition}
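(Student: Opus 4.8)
The plan is to establish two things: first, that the matrix differentiation law \eqref{A4} is indeed the adjoint representation of $D_t$ reduced modulo $R\{u\}$; and second, that the ideal $L\{u\}$ defined in \eqref{A6} is $D_x$- and $D_t$-invariant and isomorphic (as a differential module) to $R\{u\}$. For the first claim I would simply unwind Lemma \ref{Lm_1}: the generators of $Ker\,D_t$ modulo $R\{u\}$ are the triples $(f^{(1)},f^{(2)},f^{(3)})$ satisfying \eqref{A3}, and stacking these into the column vector $f=(f^{(1)},f^{(2)},f^{(3)})^{\intercal}$ rewrites \eqref{A3} verbatim as $D_t f = q(\lambda) f$ with $q(\lambda)$ the lower-triangular nilpotent matrix displayed in \eqref{A4}. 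That $q(\lambda)$ acts on $\mathbb{R}^3\{\{x,t\}\}$ and that the action is $\mathbb{R}$-linear in $f$ is immediate, so the "adjoint linear matrix representation" assertion is essentially a restatement. I would also note, as the paper does, that the scalar reformulation $D_t^3 f_3 = 0$ (equation \eqref{A5}) follows because $q(\lambda)^3 = 0$: writing $f_3 := f^{(3)}$, relations \eqref{A3} give $D_t f_3 = f^{(2)}$, $D_t^2 f_3 = D_t f^{(2)} = \lambda f^{(1)}$, and $D_t^3 f_3 = \lambda D_t f^{(1)} = 0$.

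Next I would turn to $L\{u\}$. The isomorphism with $R\{u\}$ should be produced by an explicit $\mathcal{K}\{u\}$-linear map: send the generator triple $(f^{(1)},f^{(2)},f^{(3)})$ of $R\{u\}$ (equivalently, the formal expression $\lambda\sum f^{(1)}_n D_x^n u - \sum f^{(2)}_n D_t D_x^n u + \sum f^{(3)}_n D_t^2 D_x^n u$) to the element $f_3 = f^{(3)}$ together with its $D_t$-orbit $\{f_3, D_t f_3, D_t^2 f_3\}$. Because on $\mathcal{M}_{(3)}$ we have $D_t f_3 = f^{(2)}$ and $D_t^2 f_3 = \lambda f^{(1)}$, this map is a bijection on generators with inverse $f^{(1)} = \lambda^{-1} D_t^2 f_3$, $f^{(2)} = D_t f_3$, $f^{(3)} = f_3$ (for $\lambda \neq 0$; the $\lambda = 0$ degeneration is handled separately or absorbed into the parametric family), and it intertwines the $D_t$-actions by construction. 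Extending $\mathcal{K}\{u\}$-linearly and then $\Theta$-linearly yields the claimed isomorphism of differential ideals.

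The substantive step — and the one I expect to be the main obstacle — is verifying that $L\{u\}$ is genuinely $D_x$-invariant, i.e. that $\Theta L\{u\} = L\{u\}$, and in particular closed under $D_x$. Closure under $D_t$ is structurally clear from the shift structure $g_1 f_3 + g_2 D_t f_3 + g_3 D_t^2 f_3 \mapsto$ a combination of the same three generators (using $D_t^3 f_3 = 0$ to fold the top term back down, with the $g_j$ transforming by $D_t$). Closure under $D_x$ is where one must actually use the commutator relation \eqref{A1}, $[D_x, D_t] = (D_x u) D_x$: applying $D_x$ to $D_t^k f_3$ produces, via this commutator, terms of the form $D_t^j D_x(\cdot)$ plus lower-order corrections involving $D_x u$, and one has to check these all land back in the span of $f_3, D_t f_3, D_t^2 f_3$ over $\mathcal{K}\{u\}$ — equivalently, that the second Lax relation $D_x f = \ell[u;\lambda] f$ (the spatial part of the Lax pair, not written in the excerpt but implicit) exists with matrix entries in $\mathcal{K}\{u\}$ and is compatible with \eqref{A4} in the sense that the zero-curvature condition $D_t \ell - D_x q + [\ell, q] = 0$ holds modulo the constraint $D_t^3 u = 0$. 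I would carry this out by differentiating the defining relations \eqref{A3} in $x$, commuting $D_x$ past $D_t$ using \eqref{A1}, and reading off the forced form of $\ell[u;\lambda]$; the verification of zero curvature then reduces to a finite computation that closes precisely because of the maximality of the $D_t$-invariant ideal $R\{u\}$ noted after \eqref{A5}.
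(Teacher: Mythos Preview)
Your proposal is correct and mirrors the paper's (largely implicit) argument: the paper offers no separate proof of this proposition, treating the first assertion as an immediate matrix rewriting of Lemma~\ref{Lm_1} and the scalar form~(\ref{A5}), and effectively deferring the verification of $D_x$-invariance of $L\{u\}$ to the construction of $\ell[u,v,z;\lambda]$ in (\ref{A7})--(\ref{A8a}) that follows. One small correction to your final step: since $D_x$ and $D_t$ do \emph{not} commute (relation~(\ref{A1})), the compatibility condition is not the standard zero-curvature $D_t\ell - D_x q + [\ell,q]=0$ but rather $D_t\ell + (D_x u)\,\ell = [q(\lambda),\ell] + D_x q(\lambda)$, which for the constant matrix $q(\lambda)$ here becomes exactly the paper's equation~(\ref{A8a}); this is also why the construction of $\ell$ forces the auxiliary equation~(\ref{A8}) on $r$, a feature your sketch does not anticipate but which is where the actual work lies.
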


We now construct a related adjoint linear matrix representation in the
functional vector space $\mathbb{R}^{3}{\{\{}x,t{\}\}}$ for the
differentiation $D_{x}:\mathcal{K}\{u\}\rightarrow \mathcal{K}\{u\},$
reduced modulo the Lax differential ideal $L\mathcal{\{}u\mathcal{\}}\subset
\mathcal{K}\{u\}.$ For this problem to be solved, we need to take into
account the commutator relationship (\ref{A1}) and the important invariance
condition of the Lax differential ideal $L\mathcal{\{}u\mathcal{\}}\subset
\mathcal{K}\{u\}$ with respect to the differentiation $D_{x}:\mathcal{K}%
\{u\}\rightarrow \mathcal{K}\{u\}.$ As a result of simple but slightly
tedious calculations one obtains the following matrix representation:%
\begin{equation}
D_{x}f=\ell \lbrack u,v,z;\lambda ]f,\text{ \ }\ell \lbrack u,v,z;\lambda
]:=\left(
\begin{array}{ccc}
\lambda u_{x} & -v_{x} & z_{x} \\
3\lambda ^{2} & -2\lambda u_{x} & \lambda v_{x} \\
6\lambda ^{2}r[u,v,z] & -3\lambda & \lambda u_{x}%
\end{array}%
\right) ,  \label{A7}
\end{equation}%
where, by definition, $v:=D_{t}u,z:=D_{t}v,$ $(...)_{x}:=D_{x}(...),$ a
vector $f\in \mathbb{R}^{3}\{\{x,t\}\},$ $\lambda \in \mathbb{R}$ is an
arbitrary spectral parameter and a smooth functional mapping $r:\mathcal{%
\tilde{M}}_{(3)}\rightarrow \mathbb{R}\{\{x,t\}\},\mathcal{\tilde{M}}%
_{(3)}:=\sqcap _{j=1}^{3}D_{t}^{j}\mathcal{M}_{(3)},$ solves the following
functional-differential equation
\begin{equation}
D_{t}r+rD_{x}u=1.  \label{A8}
\end{equation}%
Moreover, the matrix $\ell :=\ell \lbrack u,v,z;\lambda ]:\mathbb{R}^{3}{\{\{%
}x,t\}\}\mathbb{\rightarrow R}^{3}{\{\{}x,t\}\}$ satisfies the following
determining functional-differential equation:%
\begin{equation}
D_{t}\ell +\ell D_{x}u=[q(\lambda ),\ell ],  \label{A8a}
\end{equation}%
where $[\cdot ,\cdot ]$ denotes the usual matrix commutator in the
functional space $\mathbb{R}^{3}{\{\{}x,t\}\}.$ The following proposition
solving the representation problem posed above, holds.

\begin{proposition}
The expression (\ref{A7}) is an adjoint linear matrix representation in the
space $\mathbb{R}^{3}{\{\{}x,t\}\}$ of the differentiation $D_{x}:\mathcal{K}%
\{u\}\rightarrow \mathcal{K}\{u\},$ reduced \ modulo the invariant Lax
differential ideal $L\mathcal{\{}u\mathcal{\}}\subset \mathcal{K}\{u\},$
given by (\ref{A6}).
\end{proposition}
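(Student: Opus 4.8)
The plan is to pin down the matrix $\ell$ purely from the compatibility of the overdetermined linear system $D_{t}f=q(\lambda)f$, given by (\ref{A4}), together with a yet-unknown relation $D_{x}f=\ell\lbrack u,v,z;\lambda]f$, subject to the Lie-algebraic commutator (\ref{A1}). First I would posit that, modulo the Lax differential ideal $L\{u\}$, the $D_{x}$-derivative of $f=(f^{(1)},f^{(2)},f^{(3)})^{\intercal}$ may be written $D_{x}f=\ell f$ for some matrix $\ell$ whose entries lie in $\mathbb{R}\{\{x,t\}\}$. The existence of such an $\ell$ is exactly the assertion that $L\{u\}$ (equivalently, the isomorphic ideal $R\{u\}$) is $D_{x}$-invariant, so producing the explicit formula (\ref{A7}) and verifying the $D_{x}$-invariance used in its derivation are one and the same task, to be closed simultaneously at the end.

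Next I would extract the determining equation for $\ell$. Applying $D_{t}$ to $D_{x}f=\ell f$, using $D_{t}f=q(\lambda)f$ from (\ref{A4}), the constancy of the entries of $q(\lambda)$ (so that $D_{x}(q(\lambda)f)=q(\lambda)D_{x}f$), and the commutator identity $D_{x}D_{t}-D_{t}D_{x}=(D_{x}u)D_{x}$, one gets $(D_{t}\ell+\ell D_{x}u-[q(\lambda),\ell])f=0$. Since $f^{(1)},f^{(2)},f^{(3)}$ generate $Ker\,D_{t}$ modulo $R\{u\}$ and are functionally independent, this forces the determining functional-differential equation (\ref{A8a}), namely $D_{t}\ell+\ell D_{x}u=[q(\lambda),\ell]$. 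Thus the proposition is reduced to exhibiting a solution of (\ref{A8a}) and checking that it has the stated shape (\ref{A7}).

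Then I would solve (\ref{A8a}) explicitly. Setting $v:=D_{t}u$, $z:=D_{t}v$, one has $D_{t}z=0$ on $\mathcal{M}_{(3)}$, and applying the commutator (\ref{A1}) to $u,v,z$ yields the auxiliary relations $D_{t}u_{x}=v_{x}-u_{x}^{2}$, $D_{t}v_{x}=z_{x}-u_{x}v_{x}$, $D_{t}z_{x}=-u_{x}z_{x}$, which are precisely what is needed to close the computation with $\ell$ depending only on $u_{x},v_{x},z_{x}$ and $\lambda$. Writing $\ell$ as a matrix polynomial in $\lambda$ whose degree pattern is dictated by that of $q(\lambda)$ and matching (\ref{A8a}) entry by entry, every entry of $\ell$ gets pinned down except the $(3,1)$ entry; for that slot the matching condition collapses to $D_{t}r+rD_{x}u=1$, i.e.\ equation (\ref{A8}), for the scalar functional $r=r[u,v,z]$ occupying it. Here one notes that (\ref{A8}) is solvable: along the characteristic generated by $D_{t}$, with $(u,v,z)$ serving as coordinates on $\mathcal{\tilde{M}}_{(3)}$, it is a first-order linear inhomogeneous ordinary differential equation and therefore admits a smooth solution $r:\mathcal{\tilde{M}}_{(3)}\rightarrow \mathbb{R}\{\{x,t\}\}$. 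Substituting the resulting $\ell$ back confirms it equals (\ref{A7}), and since this $\ell$ then satisfies (\ref{A8a}), the pair $D_{x}f=\ell f$, $D_{t}f=q(\lambda)f$ is compatible, which re-establishes the $D_{x}$-invariance of $L\{u\}$ and finishes the argument.

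I expect the main obstacle to be the solution step for (\ref{A8a}): selecting the correct $\lambda$-ansatz for $\ell$ and, above all, recognizing that the system does \emph{not} close within differential polynomials in $u$ alone but requires adjoining the new dependent quantity $r$ governed by (\ref{A8}). Isolating this extra field, and checking that its equation is both solvable and consistent with the auxiliary commutator relations above, is the delicate point; the remaining verifications are the ``slightly tedious'' but entirely routine matrix manipulations.
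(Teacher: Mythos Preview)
Your proposal is correct and follows essentially the same route as the paper: the paper derives (\ref{A7}) by imposing $D_{x}$-invariance of the Lax ideal together with the commutator relation (\ref{A1}), obtains the determining equation (\ref{A8a}) for $\ell$, and notes that the residual freedom is precisely the scalar $r$ governed by (\ref{A8}); you have simply made explicit the ``simple but slightly tedious calculations'' that the paper alludes to without writing out.
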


\begin{remark}
Here it is necessary to mention that the matrix representation (\ref{A4})
coincides completely with that obtained before in the work \cite{GPPP} by
means of completely different methods, based mainly on the
gradient-holonomic algorithm, devised in \cite{PM,MBPS,HPP}. The presented
derivation of these representations (\ref{A4}) and (\ref{A7}) is much easier
and simpler that can be explained by a deeper insight into the integrability
problem, devised above using the differential algebraic approach.
\end{remark}

To proceed further, it is now worth to observing that the invariance
condition for the Lax differential ideal $L\mathcal{\{}u\mathcal{\}}\subset
\mathcal{K}\{u\}$ with respect to the differentiations $D_{x},D_{t}:\mathcal{%
K}\{u\}\rightarrow \mathcal{K}\{u\}$ is also equivalent to the related Lax
type representation for the generalized Riemann type equation \ref{H0a} in
the following dynamical system form:
\begin{equation}
\left.
\begin{array}{c}
u_{t}=v-uu_{x} \\
v_{t}=z-uv_{x} \\
z_{t}=-uz_{x}%
\end{array}%
\right\} :=K[u,v,z],  \label{A9}
\end{equation}%
Namely, the following theorem, summing up the results obtained above, holds.

\begin{theorem}
The linear differential-matrix expressions (\ref{A4}) and (\ref{A7}) in the
space $\mathbb{R}^{3}{\{\{}x,t\}\}$ for differentiations $D_{t}:\mathcal{K}%
\{u\}\rightarrow $ $\mathcal{K}\{u\}$ and $D_{x}:\mathcal{K}\{u\}\rightarrow
\mathcal{K}\{u\},$ respectively, provide us with the standard Lax type
representation for the generalized Riemann type equation (\ref{H0a}) in the
equivalent dynamical system form (\ref{A9}), thereby implying its Lax type
integrability.
\end{theorem}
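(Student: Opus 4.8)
The plan is to verify that the two linear differential-matrix equations $D_t f = q(\lambda) f$ and $D_x f = \ell[u,v,z;\lambda] f$ form a compatible overdetermined system precisely when $(u,v,z)$ evolves according to the dynamical system \eqref{A9}, and conversely. Compatibility of the pair $D_x f = \ell f$, $D_t f = q f$ means that the mixed derivatives agree: applying $D_t$ to the first equation and $D_x$ to the second and subtracting, one must use the commutator relation \eqref{A1}, namely $[D_x, D_t] = (D_x u) D_x$, rather than plain commutativity. Concretely, I would compute $D_t(D_x f) - D_x(D_t f)$ two ways. On one hand it equals $[D_t, D_x] f = -(D_x u) D_x f = -(D_x u)\,\ell f$. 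On the other hand, substituting the representations, $D_t(\ell f) - D_x(q f) = (D_t \ell) f + \ell (D_t f) - (D_x q) f - q (D_x f) = (D_t \ell) f + \ell q f - q \ell f$, since $q(\lambda)$ is a constant matrix with $D_x q = 0$. Equating the two expressions and cancelling the arbitrary vector $f$ yields exactly the determining equation
\[
D_t \ell + \ell\,(D_x u) = [q(\lambda), \ell],
\]
which is precisely \eqref{A8a}. Thus the zero-curvature (compatibility) condition of the Lax pair is equivalent to \eqref{A8a}.

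The next step is to show that \eqref{A8a} holds as an identity in $\mathcal{K}\{u\}$ if and only if $(u,v,z)$ satisfies \eqref{A9}. This is the computational heart of the argument, but it is the routine part: one substitutes the explicit matrix $\ell[u,v,z;\lambda]$ from \eqref{A7} into \eqref{A8a}, expands both sides entrywise as polynomials in $\lambda$, and reads off the coefficients. Using $v := D_t u$ and $z := D_t v$ by definition, the equations in the entries will reduce — after also invoking the auxiliary scalar equation \eqref{A8} satisfied by $r[u,v,z]$, which guarantees the $(3,1)$-entry closes up — to the statement that $D_t z = -u\,D_x z$, i.e. $z_t = -u z_x$, together with the identities $u_t = v - u u_x$ and $v_t = z - u v_x$ that are simply the defining relations $v = u_t + u u_x = D_t u$ and $z = v_t + u v_x = D_t v$ rewritten. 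Hence \eqref{A8a} is equivalent to the system \eqref{A9}, and \eqref{A9} is in turn equivalent to the original constraint $D_t^3 u = 0$ upon eliminating $v$ and $z$. I would present this verification compactly, pointing out that it has already been organized by Lemma~\ref{Lm_1} and the two preceding propositions: the $D_t$-representation \eqref{A4} comes from the invariance of the Riemann (hence Lax) differential ideal under $D_t$, and the $D_x$-representation \eqref{A7} from its invariance under $D_x$ together with \eqref{A1}.

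Finally, I would assemble these pieces into the theorem's conclusion. By the two propositions, \eqref{A4} realizes $D_t$ and \eqref{A7} realizes $D_x$, each reduced modulo the invariant Lax differential ideal $L\{u\} \subset \mathcal{K}\{u\}$; the joint $D_x$- and $D_t$-invariance of $L\{u\}$ is exactly the compatibility of the two matrix equations, which by the computation above is \eqref{A8a}, which is \eqref{A9}, which is \eqref{H0a} at $N=3$. Therefore \eqref{A4} and \eqref{A7} constitute a Lax pair for \eqref{A9} with spectral parameter $\lambda \in \mathbb{R}$, proving Lax type integrability. The main obstacle I anticipate is not conceptual but bookkeeping: correctly handling the auxiliary function $r[u,v,z]$ defined by \eqref{A8} so that the $\lambda^2$-terms in the $(3,1)$-entry of the commutator $[q,\ell]$ match $D_t \ell + \ell\,D_x u$ — one must check that \eqref{A8} is precisely the condition making that entry consistent, and that \eqref{A8} is itself solvable on $\tilde{\mathcal{M}}_{(3)}$ (which follows because $D_t r + r D_x u = 1$ is a first-order linear ODE along the characteristics of $D_t$, integrable since $D_t$ is a genuine vector field on $\mathbb{R}^2$). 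Everything else is a finite, mechanical coefficient comparison.
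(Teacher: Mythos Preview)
Your proposal is correct and follows essentially the same route as the paper: the theorem is stated there as a summary of Lemma~\ref{Lm_1} and the two preceding propositions, with the compatibility condition \eqref{A8a} (derived via \eqref{A1}) playing exactly the role you describe, and the auxiliary constraint \eqref{A8} on $r$ absorbing the $(3,1)$-entry. The paper gives no separate proof beyond ``summing up the results obtained above,'' so your explicit zero-curvature computation simply fills in the ``simple but slightly tedious calculations'' the paper alludes to.
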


The next problem of great interest is to construct, making use of the
differential-algebraic tools, the functional-differential solutions to the
determining equation (\ref{R3}), and to construct the corresponding
differential-algebraic analogs of the symplectic structures characterizing
the differentiations $D_{x},D_{t}:\mathcal{K}\{u\}\rightarrow \mathcal{K}%
\{u\},$ as well as the local densities of the related conservation laws,
which were derived in \cite{GPPP,PoP}.

\subsection{The solution set analysis of the functional-differential
equation $D_{t}r+rD_{x}u=1$}

We consider the generalized Riemann type dynamical system (\ref{A9}) on a $\
$suitable $2\pi $-periodic functional manifold $\mathcal{M}_{(3)}\subset
\mathbb{R}^{3}{\{\{}x,t\}\}:$
\begin{equation}
\left.
\begin{array}{c}
u_{t}=v-uu_{x} \\
v_{t}=z-uv_{x} \\
z_{t}=-uz_{x}%
\end{array}%
\right\} :=K[u,v,z],  \label{R1}
\end{equation}%
which, as shown above and in \cite{GPPP,PoP}, possesses the following Lax
type representation:
\begin{equation}
\begin{array}{c}
f_{x}=\ell \lbrack u,v,z;\lambda ]{f},\text{ \ \ }f_{t}=p(\ell )f,\text{ \ \
}p(\ell ):=-u\ell \lbrack u,v,z;\lambda ]+q(\lambda ), \\
\\
\\
\ell \lbrack u,v,z;\lambda ]=\left(
\begin{array}{ccc}
\lambda u_{x} & -v_{x} & z_{x} \\
3\lambda ^{2} & -2\lambda u_{x} & \lambda v_{x} \\
6\lambda ^{2}r[u,v,z] & -3\lambda & \lambda u_{x}%
\end{array}%
\right) ,\text{ \ }q(\lambda ):=\left(
\begin{array}{ccc}
0 & 0 & 0 \\
\lambda & 0 & 0 \\
0 & 1 & 0%
\end{array}%
\right) , \\
\\
\\
\text{ \ }p(\ell )=\left(
\begin{array}{ccc}
-\lambda uu_{x} & uv_{x} & -uz_{x} \\
-3u\lambda ^{2}+\lambda & 2\lambda uu_{x} & -\lambda uv_{x} \\
-6\lambda ^{2}r[u,v,z]u & 1+3u\lambda & -\lambda uu_{x}%
\end{array}%
\right) ,%
\end{array}%
,  \label{R2}
\end{equation}%
where $f\in L_{\infty }(\mathbb{R};\mathbb{E}^{3}),$ $\lambda \in \mathbb{R}$
is an arbitrary spectral parameter and a function $r:\mathcal{M}\rightarrow
\mathbb{R}$ satisfies the following functional-differential equation:%
\begin{equation}
D_{t}r+rD_{x}u=1  \label{R3}
\end{equation}%
under the commutator condition (\ref{A1}).

Below we will describe all functional solutions to equation (\ref{R3}),
making use of the lemma in \cite{GBPPP,PoP}.

\begin{lemma}
The following functions
\begin{equation}
B_{0}=\xi (z),\text{ }B_{1}=u-tv+zt^{2}/2,\text{ }B_{2}=v-zt,\text{ \ }%
B_{3}=x-tu+vt^{2}/2-zt^{3}/6,  \label{R3a}
\end{equation}%
where $\xi :D_{t}^{2}\mathcal{M}\rightarrow \mathbb{R}\{\{x,t\}\}$ is an
arbitrary smooth mapping, are the main invariants of the Riemann type
dynamical system (\ref{R1}), satisfying the determining condition
\begin{equation}
D_{t}B=0.  \label{R4}
\end{equation}
\end{lemma}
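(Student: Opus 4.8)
The plan is to reduce the verification to the action of the derivation $D_{t}$ on the basic variables $u,v,z$ and on the coordinate functions $x,t$, and then to check the four identities by a short direct computation.

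First I would rewrite the dynamical system (\ref{R1}) in terms of the differentiation $D_{t}=\partial/\partial t+u\,\partial/\partial x$ itself. The first equation $u_{t}=v-uu_{x}$ gives $D_{t}u=u_{t}+uu_{x}=v$; similarly the second and third equations give $D_{t}v=v_{t}+uv_{x}=z$ and $D_{t}z=z_{t}+uz_{x}=0$. Moreover, viewing $x$ and $t$ as elements of the ring, one has $D_{t}t=1$ and $D_{t}x=u$ directly from (\ref{A1-0}). These five relations, together with the Leibniz rule for $D_{t}$, are all that the proof requires.

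Next I would apply $D_{t}$ termwise to each $B_{j}$. For $B_{0}=\xi(z)$ the chain rule and $D_{t}z=0$ give $D_{t}B_{0}=\xi'(z)\,D_{t}z=0$, which is exactly where the arbitrariness of the smooth map $\xi$ enters. For $B_{2}=v-zt$ one gets $D_{t}B_{2}=D_{t}v-(D_{t}z)t-z(D_{t}t)=z-0-z=0$. For $B_{1}=u-tv+zt^{2}/2$, expanding with the Leibniz rule and substituting the five relations, the $v$ coming from $D_{t}u$ cancels the $-v$ coming from $D_{t}(-tv)$, while the $-tz$ from $D_{t}(-tv)$ cancels the $+tz$ from $D_{t}(zt^{2}/2)$, so $D_{t}B_{1}=0$. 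For $B_{3}=x-tu+vt^{2}/2-zt^{3}/6$ the same bookkeeping---now using $D_{t}x=u$ for the leading term---produces pairwise cancellation of the $u$, $tv$, and $zt^{2}/2$ contributions, giving $D_{t}B_{3}=0$.

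The argument is entirely computational; the only point deserving care is the explicit $t$-dependence of $B_{1},B_{2},B_{3}$, which forces one to retain the terms produced by $D_{t}t=1$ (and, for $B_{3}$, by $D_{t}x=u$) rather than treating $x$ and $t$ as constants. Once these are accounted for, every cancellation is immediate and the determining condition (\ref{R4}) follows for all four invariants. One may also note that $B_{1},B_{2},B_{3}$ are precisely the ``Taylor coefficients in $t$'' obtained by integrating the chain $D_{t}z=0$, $D_{t}v=z$, $D_{t}u=v$, which accounts for their form.
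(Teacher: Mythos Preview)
Your direct verification is correct. The paper does not give its own proof of this lemma; it simply records the result, attributing it to the earlier works \cite{GBPPP,PoP}. The computation you carry out---using $D_{t}u=v$, $D_{t}v=z$, $D_{t}z=0$ together with $D_{t}t=1$ and $D_{t}x=u$, and then checking the four identities by the Leibniz rule---is exactly the natural way to establish it, and all of your cancellations are right. Your closing remark that $B_{1},B_{2},B_{3}$ arise as ``Taylor coefficients in $t$'' from integrating the chain $D_{t}z=0$, $D_{t}v=z$, $D_{t}u=v$ is also to the point and explains where these particular combinations come from.
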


As a simple consequence of relationships (\ref{R3a}) the next lemma holds.

\begin{lemma}
The local functionals
\begin{equation}
b_{0}:=\xi (z),b_{1}:=\frac{u}{z}-\frac{v^{2}}{2z^{2}},b_{2}:=\frac{u_{x}}{%
z_{x}}-\frac{v_{x}^{2}}{2z_{x}^{2}},b_{3}:=x-\frac{uv}{z}+\frac{v^{3}}{3z^{2}%
}  \label{R5}
\end{equation}%
and
\begin{equation*}
\tilde{b}_{1}:=\frac{v}{z},\tilde{b}_{2}:=\frac{v_{x}}{z_{x}}
\end{equation*}%
on the functional manifold $\mathcal{\tilde{M}}_{(3)}$ are the basic
functional solutions $b_{j}:\mathcal{\tilde{M}}_{(3)}$ $\rightarrow \mathbb{R%
}\{\{x,t\}\},$ $j=\overline{0,3},$ and $\tilde{b}_{k}:\mathcal{\tilde{M}}%
_{(3)}\rightarrow \mathbb{R}\{\{x,t\}\},k=\overline{1,2},$ to the
determining functional-differential equations%
\begin{equation}
D_{t}b=0  \label{R6}
\end{equation}%
and
\begin{equation}
D_{t}\tilde{b}=1,  \label{R7}
\end{equation}%
respectively.
\end{lemma}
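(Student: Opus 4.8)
The plan is to obtain (\ref{R6})--(\ref{R7}) as a direct corollary of the preceding lemma by \emph{eliminating the explicit time variable} $t$ from the invariants (\ref{R3a}). Two trivial observations are needed alongside: $D_{t}z=z_{t}+uz_{x}=-uz_{x}+uz_{x}=0$, so $z$ itself is annihilated by $D_{t}$; and $D_{t}t=1$, $D_{t}x=u$, directly from (\ref{A1-0}). Since $D_{t}$ is a derivation, every rational expression in $B_{0},B_{1},B_{2},B_{3}$ and $z$ again lies in $\ker D_{t}$ wherever it is defined; hence it suffices to exhibit each $b_{j}$ as such an expression, and each $\tilde{b}_{k}$ as one of these plus the summand $t$.

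For the functionals built from $u,v,z$ alone, I would invert (\ref{R3a}) to get $v=B_{2}+zt$ and $u=B_{1}+tB_{2}+zt^{2}/2$, and substitute. In $u/z-v^{2}/(2z^{2})$ the explicit powers of $t$ cancel, leaving $b_{1}=B_{1}/z-B_{2}^{2}/(2z^{2})$; substituting in addition $x=B_{3}+tu-vt^{2}/2+zt^{3}/6$ (read off from $B_{3}$) the same cancellation produces $b_{3}=B_{3}-B_{1}B_{2}/z+B_{2}^{3}/(3z^{2})$; while $b_{0}=B_{0}$ and $\tilde{b}_{1}=v/z=B_{2}/z+t$ are immediate. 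Therefore $D_{t}b_{0}=D_{t}b_{1}=D_{t}b_{3}=0$ and $D_{t}\tilde{b}_{1}=D_{t}(B_{2}/z)+D_{t}t=0+1=1$.

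For $b_{2}$ and $\tilde{b}_{2}$, which involve $u_{x},v_{x},z_{x}$, I would first differentiate (\ref{R3a}) by $D_{x}$ and use the commutator (\ref{A1}) written as $D_{t}D_{x}=D_{x}D_{t}-u_{x}D_{x}$: from $D_{t}B_{j}=0$ this gives $D_{t}(D_{x}B_{j})=-u_{x}(D_{x}B_{j})$, and from $D_{t}z=0$ it gives $D_{t}z_{x}=-u_{x}z_{x}$. Thus $D_{x}B_{1}$, $D_{x}B_{2}$ and $z_{x}$ all transform under $D_{t}$ with the \emph{same} factor $-u_{x}$, so that any weight-zero rational combination of them lies in $\ker D_{t}$. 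Inverting $D_{x}B_{2}=v_{x}-tz_{x}$ and $D_{x}B_{1}=u_{x}-tv_{x}+t^{2}z_{x}/2$ to express $v_{x}$ and $u_{x}$ through $D_{x}B_{1},D_{x}B_{2},z_{x},t$ and substituting, the $t$-terms again cancel in $b_{2}$ and collect into a single $t$ in $\tilde{b}_{2}$, yielding $b_{2}=(D_{x}B_{1})/z_{x}-(D_{x}B_{2})^{2}/(2z_{x}^{2})$ and $\tilde{b}_{2}=(D_{x}B_{2})/z_{x}+t$. Hence $D_{t}b_{2}=0$ and $D_{t}\tilde{b}_{2}=1$.

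None of this is hard; I expect the only point needing care to be the passage from (\ref{R3a}) to its $D_{x}$-derivatives, where (\ref{A1}) must be used rather than naively commuting $D_{x}$ past $D_{t}$ --- it is precisely the common factor $-u_{x}$ in the transformation laws of $D_{x}B_{j}$ and $z_{x}$ that renders the quotients defining $b_{2}$ and $\tilde{b}_{2}$ invariant. The word ``basic'' is then justified by the remark that, $D_{t}$ being a derivation, any two solutions of $D_{t}\tilde{b}=1$ (resp. of $D_{t}b=0$) differ by an element of $\ker D_{t}$, so these explicit $t$-free representatives generate the full solution sets together with the invariants (\ref{R3a}). As an independent check one may verify (\ref{R6})--(\ref{R7}) by brute-force differentiation, using $D_{t}u=v$, $D_{t}v=z$, $D_{t}z=0$ and their $D_{x}$-consequences $D_{t}u_{x}=v_{x}-u_{x}^{2}$, $D_{t}v_{x}=z_{x}-u_{x}v_{x}$, $D_{t}z_{x}=-u_{x}z_{x}$: the Leibniz and chain rules then collapse each $D_{t}b_{j}$ to $0$ and each $D_{t}\tilde{b}_{k}$ to $1$.
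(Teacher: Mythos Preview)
Your proof is correct and follows exactly the route the paper indicates: the paper provides no detailed argument, stating only that the lemma is ``a simple consequence of relationships (\ref{R3a}),'' and your proposal carries out precisely that derivation, eliminating $t$ from the invariants $B_j$ (and from $D_xB_j/z_x$ via the commutator (\ref{A1})) to obtain the $t$-free functionals $b_j$ and $\tilde b_k$.
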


Now one can formulate the following theorem about the general solution set
to the functional-differential equation (\ref{R6}).

\begin{theorem}
The following infinite hierarchies
\begin{equation}
\eta _{1,j}^{(n)}:=(\alpha D_{x})^{n}b_{j},\text{ \ }\eta
_{2,k}^{(n)}:=(\alpha D_{x})^{n+1}\tilde{b}_{k},  \label{R8}
\end{equation}%
where $\alpha :=1/z_{x},$ $j=\overline{0,3},$ $k=\overline{1,2}$ and $n\in
\mathbb{Z}_{+},$ are the basic functional solutions to the
functional-differential equation (\ref{R6}), that is
\begin{equation}
D_{t}\eta _{s,j}^{(n)}=0  \label{R9}
\end{equation}%
for $s=\overline{1,2},$ $j=\overline{0,3}$ and all $n\in \mathbb{Z}_{+}.$
\end{theorem}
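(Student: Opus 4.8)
The plan is to verify directly that $D_t \eta_{s,j}^{(n)} = 0$ by induction on $n$, using the established facts that $b_j$ and $\tilde b_k$ solve \eqref{R6} and \eqref{R7} respectively, together with the commutator structure \eqref{A1}. The crucial algebraic ingredient is a commutation formula relating $D_t$ to the operator $\alpha D_x$ with $\alpha := 1/z_x$. First I would compute $[D_t, \alpha D_x]$ acting on an arbitrary element of $\mathcal{K}\{u\}|_{\mathcal M_{(3)}}$. From \eqref{A1} we have $[D_x, D_t] = (D_x u) D_x$, hence $[D_t, D_x] = -(D_x u) D_x$, and since $D_t$ is a derivation, $[D_t, \alpha D_x] = (D_t \alpha) D_x + \alpha [D_t, D_x] = \big(D_t \alpha - \alpha\, D_x u\big) D_x$. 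The key point is that $\alpha = 1/z_x$ satisfies $D_t \alpha - \alpha D_x u = 0$: indeed, differentiating $z_t = -u z_x$ (the third equation of \eqref{R1}) with respect to $x$ gives $D_t z_x = z_{xt} = -u_x z_x - u z_{xx} = -u_x z_x + z_x D_x(\cdot)$ — more precisely $D_t(z_x) + z_x (D_x u) = 0$ after using $[D_x, D_t] = (D_x u)D_x$ applied to $z$, so $D_t(1/z_x) = (D_t z_x)/z_x^2 = (D_x u)/z_x = \alpha\, D_x u$. Therefore $[D_t, \alpha D_x] = 0$ on the relevant functional manifold.

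Granting this commutation identity, the theorem becomes almost immediate. For the first hierarchy, $D_t \eta_{1,j}^{(n)} = D_t (\alpha D_x)^n b_j = (\alpha D_x)^n D_t b_j = (\alpha D_x)^n \cdot 0 = 0$ by Lemma with \eqref{R6}. For the second hierarchy, $D_t \eta_{2,k}^{(n)} = D_t (\alpha D_x)^{n+1} \tilde b_k = (\alpha D_x)^{n+1} D_t \tilde b_k = (\alpha D_x)^{n+1}(1)$, and since $\alpha D_x(1) = 0$, this vanishes as well (here $n+1 \ge 1$, so at least one factor $\alpha D_x$ hits the constant $1$). One should also check that the $\eta$'s are genuinely nontrivial local functionals on $\mathcal{\tilde M}_{(3)}$ — i.e. that they lie in $\mathbb{R}\{\{x,t\}\}$ after reduction — which follows because $b_j, \tilde b_k$ are themselves local functionals there and $\alpha D_x$ preserves this property.

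The main obstacle I anticipate is the careful bookkeeping around the commutator identity on the reduced manifold: one must make sure that the relation $D_t z_x + z_x D_x u = 0$ is being used on $\mathcal{\tilde M}_{(3)} = \sqcap_{j=1}^3 D_t^j \mathcal M_{(3)}$ where the constraint $D_t^3 u = 0$ (equivalently $z_t = -u z_x$) holds, so that $\alpha = 1/z_x$ is well-defined (one needs $z_x \ne 0$, which is part of the choice of the periodic functional manifold) and the cancellation $D_t \alpha - \alpha D_x u = 0$ is exact rather than merely formal. A secondary subtlety is confirming that the word ``basic'' in the statement is justified — that this hierarchy is complete, i.e. exhausts the solution set of \eqref{R6} — but for the proof of the displayed relations \eqref{R9} only the direct verification above is required, so I would present completeness as a consequence of the fact that $\ker D_t$ modulo the Riemann ideal is finitely generated (as in Lemma~\ref{Lm_1}) and $\alpha D_x$ acts transitively enough on the generators $B_0,\dots,B_3$ of \eqref{R3a}, leaving the detailed dimension count to the subsequent discussion.
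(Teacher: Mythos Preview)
Your proposal is correct and follows essentially the same route as the paper: the paper's proof simply asserts that if $D_t b=0$ (resp.\ $D_t\tilde b=1$) then $(\alpha D_x)b$ (resp.\ $(\alpha D_x)\tilde b$) solves \eqref{R6}, and then iterates --- you supply the missing justification by proving the commutator identity $[D_t,\alpha D_x]=0$ via $D_t\alpha=\alpha\,D_xu$. One small slip: you wrote $D_t(1/z_x)=(D_tz_x)/z_x^2$, which should be $-(D_tz_x)/z_x^2$; the missing sign cancels against $D_tz_x=-(D_xu)z_x$, so your conclusion is unaffected.
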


\begin{proof}
It is enough to observe that for any smooth solutions $b$ and $\tilde{b}:%
\mathcal{\tilde{M}}_{(3)}\rightarrow \mathbb{R}\{\{x,t\}\}$ to
functional-differential equations (\ref{R6}) and (\ref{R7}), respectively, \
the expressions $(\alpha D_{x})b$ and $(\alpha D_{x})\tilde{b}$ are
solutions to the determining functional-differential equation (\ref{R6}).
Iterating the operator $\alpha D_{x},$ one obtains the theorem statement.
\end{proof}

We proceed now to analyze the solution set to\ the functional-differential
equation (\ref{R3}), making use of the following transformation:
\begin{equation}
r:=\frac{a}{\alpha \eta },  \label{R10}
\end{equation}%
where $\eta :\mathcal{\tilde{M}}_{(3)}\rightarrow \mathbb{R}\{\{x,t\}\}$ is
any solution to equation (\ref{R9}) and a smooth functional mapping $a:$ $%
\mathcal{M}_{(3)}\rightarrow \mathbb{R}\{\{x,t\}\}$ satisfies the following
determining functional-differential equation:%
\begin{equation}
D_{t}a=\alpha \eta .  \label{R11}
\end{equation}%
Then any solution to functional-differential equation (\ref{R3}) has the
form
\begin{equation}
r=\frac{a}{\alpha \eta }+\eta _{0},  \label{R12}
\end{equation}%
where $\eta _{0}:\mathcal{\tilde{M}}_{(3)}\rightarrow \mathbb{R}\{\{x,t\}\}$
is any smooth solution to the functional-differential equation (\ref{R9}).

To find solutions to equation (\ref{R11}), we make use of the following
linear $\alpha $-expansion in the corresponding Riemann differential ideal $%
R\{\alpha\}$ $\subset \mathcal{K}\{\alpha \}:$
\begin{equation}
a=c_{3}+c_{0}\alpha +c_{1}\dot{\alpha}+c_{2}\ddot{\alpha}\in R\{\alpha \},%
\text{ \ \ }  \label{R13}
\end{equation}%
where $\dot{\alpha}:=D_{t}\alpha ,\ddot{\alpha}:=D_{t}^{2}\alpha $ and
taking into account that all functions $\alpha ,\dot{\alpha}$ and $\ddot{%
\alpha}$ are functionally independent owing to the fact that $\dddot{\alpha}%
:=D_{t}^{3}\alpha =0.$ As a result of substitution (\ref{R13}) into (\ref%
{R11}) we obtain the relationships%
\begin{equation}
\dot{c}_{1}+c_{0}=0,\text{ }\dot{c}_{0}=\eta ,\text{ }\dot{c}_{2}+c_{1}=0,%
\text{ }\dot{c}_{3}+c_{2}=0.  \label{R14}
\end{equation}%
Whence, owing to (\ref{R7}) we have at the special solution $\eta =1$ to
equation (\ref{R9}) two functional solutions for the mapping $c_{0}:\mathcal{%
\tilde{M}}_{(3)}\rightarrow \mathbb{R}\{\{x,t\}\}:$%
\begin{equation}
c_{0}^{(1)}=\frac{v}{z},\text{ \ \ }c_{0}^{(2)}=\frac{v_{x}}{z_{x}}.
\label{R14a}
\end{equation}%
As a result, solving the recurrent functional equations (\ref{R14}) yields

\begin{eqnarray}
a_{2}^{(1)} &=&[(xv-u^{2}/2)/z]_{x},\text{ }a_{2}=\frac{v_{x}}{z_{x}^{2}}-%
\frac{u_{x}^{2}}{2z_{x}^{2}},\text{ \ \ }  \label{R15} \\
a_{2}^{(1)} &=&\frac{v_{x}v^{3}}{6z_{x}z^{3}}-\frac{u_{x}v^{2}}{2z_{x}z^{2}}+%
\frac{u(uz-v^{2})}{6z^{3}}+\frac{v}{zz_{x}},  \notag
\end{eqnarray}%
giving rise to the following three functional solutions to (\ref{R3}):%
\begin{eqnarray}
r_{1}^{(1)} &=&\frac{v_{x}v^{3}}{6z^{3}}-\frac{u_{x}v^{2}}{2z^{2}}+\frac{%
u(uz-v^{2})z_{x}}{6z^{3}}+\frac{v}{z},  \label{R16} \\
r_{1}^{(2)} &=&(xv-u^{2}/2)/z]_{x},\text{ }r_{2}=\frac{v_{x}}{z_{x}}-\frac{%
u_{x}^{2}}{2z_{x}}.  \notag
\end{eqnarray}%
Having now chosen the next special solution $\eta :=b_{2}=\frac{u_{x}}{z_{x}}%
-\frac{v_{x}^{2}}{2z_{x}^{2}}$ to equation (\ref{R9}), one easily obtains
from (\ref{R14}) that the functional expression
\begin{equation}
r_{3}=(\frac{u_{x}^{3}}{6z_{x}^{2}}-\frac{u_{x}v_{x}}{2z_{x}^{2}}+\frac{3}{%
4z_{x}})/(\frac{u_{x}}{z_{x}}-\frac{v_{x}^{2}}{2z_{x}^{2}})  \label{R17}
\end{equation}%
also solves the functional-differential equation (\ref{R14}). Proceeding as
above, one can construct an infinite set ${\Omega }$ of the desired
solutions to the functional-differential equation (\ref{R14}) on the
manifold $\mathcal{\tilde{M}}_{(3)}.$ Thereby one has the following theorem.

\begin{theorem}
The complete set $\mathcal{R}$ of functional-differential solutions to
equation (\ref{R3}) on the manifold $\mathcal{\tilde{M}}_{(3)}$ is generated
by functional solutions in the form (\ref{R12}) to the reduced
functional-differential equations (\ref{R9}) and (\ref{R11}).
\end{theorem}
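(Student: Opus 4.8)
=== PROOF PROPOSAL ===

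The plan is to show that the representation \eqref{R12} exhausts the solution set of \eqref{R3} by an affine-space argument: first establish that \eqref{R3} is an inhomogeneous linear functional-differential equation whose associated homogeneous equation is precisely \eqref{R9}, and then verify that \eqref{R12} provides both a particular solution and the full space of homogeneous solutions. Concretely, I would begin by recalling that if $r'$ and $r''$ both satisfy $D_{t}r+rD_{x}u=1$, then their difference $\eta_{0}:=r'-r''$ satisfies $D_{t}\eta_{0}+\eta_{0}D_{x}u=0$; using the integrating factor $\alpha=1/z_{x}$ together with the commutator identity \eqref{A1} one checks that $D_{x}u$ can be absorbed, reducing the homogeneous equation to $D_{t}(\eta_{0}/\alpha)=0$ up to the manipulations already carried out in the proof of the theorem on the hierarchies \eqref{R8}, so that $\eta_{0}$ ranges exactly over the solutions of \eqref{R9}.

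Next I would verify that the ansatz \eqref{R10}, namely $r=a/(\alpha\eta)$ with $\eta$ solving \eqref{R9} and $a$ solving \eqref{R11}, does give a particular solution of \eqref{R3}. This is a direct computation: applying $D_{t}$ to $r=a/(\alpha\eta)$ via the quotient rule, substituting $D_{t}a=\alpha\eta$ from \eqref{R11} and $D_{t}(\alpha\eta)$ computed from the known behaviour of $\alpha$ and $\eta$ under $D_{t}$ (in particular $D_{t}\eta=0$), and then collecting terms against $D_{x}u$ using \eqref{A1} and the definition of $\alpha$. The key algebraic fact to invoke here is that $\alpha=1/z_{x}$ satisfies a first-order relation under $D_{t}$ coming from differentiating $z_{t}=-uz_{x}$, which is exactly what makes the cancellation work. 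Combining this particular solution with the homogeneous family yields \eqref{R12}, and the fact that \eqref{R11} is itself solvable — so that at least one $a$ exists — follows from the $\alpha$-expansion argument \eqref{R13}–\eqref{R14} already presented, which reduces \eqref{R11} to the explicitly integrable recurrent system \eqref{R14} whose solvability rests on the existence of solutions to \eqref{R7}.

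Finally, to close the argument I would observe that every solution $r$ of \eqref{R3} can be written in the form \eqref{R12}: given $r$, pick any particular solution $r_{*}=a_{*}/(\alpha\eta_{*})$ produced by the construction above, set $\eta_{0}:=r-r_{*}$, and note by the first step that $\eta_{0}$ solves \eqref{R9}; hence $r=a_{*}/(\alpha\eta_{*})+\eta_{0}$ has precisely the asserted shape. This shows the set $\mathcal{R}$ is generated, in the stated affine sense, by the solution sets of the reduced equations \eqref{R9} and \eqref{R11}.

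I expect the main obstacle to be the bookkeeping in the particular-solution verification — tracking how $D_{x}u$ enters through the commutator \eqref{A1} when differentiating the quotient $a/(\alpha\eta)$, and confirming that the $D_{x}u$-proportional terms cancel identically rather than only on a subvariety. A secondary subtlety is making precise the sense in which \eqref{R12} is the \emph{complete} set: one must be careful that the freedom in choosing $\eta$ and $\eta_{0}$ among solutions of \eqref{R9}, together with the freedom in $a$ modulo $D_{t}$-constants, does not over- or under-count, i.e. that different $(\eta,a,\eta_{0})$ can give the same $r$ but every $r$ is hit. Both issues are resolved by the linearity of \eqref{R3} in $r$ and by the already-established structure of \eqref{R9}'s solution hierarchy, so no genuinely new machinery is needed beyond what the preceding lemmas and theorems supply.
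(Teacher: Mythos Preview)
Your overall strategy --- treat \eqref{R3} as an inhomogeneous linear equation, produce a particular solution via the substitution \eqref{R10}--\eqref{R11}, and then add the general homogeneous solution --- is exactly the reasoning the paper carries out in the paragraphs leading up to the theorem (the paper states the theorem without a separate proof; the preceding construction \eqref{R10}--\eqref{R14} is the argument). So in outline you are aligned with the paper.

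There is, however, a genuine computational error in your integrating-factor step. From $D_{t}z=0$ and the commutator \eqref{A1} one gets $D_{t}z_{x}=-u_{x}z_{x}$, hence $D_{t}\alpha=\alpha D_{x}u$ for $\alpha=1/z_{x}$. If $\eta_{0}$ solves the homogeneous equation $D_{t}\eta_{0}+\eta_{0}D_{x}u=0$, then
\[
D_{t}(\alpha\eta_{0})=(D_{t}\alpha)\eta_{0}+\alpha D_{t}\eta_{0}=\alpha u_{x}\eta_{0}-\alpha u_{x}\eta_{0}=0,
\]
so it is $\alpha\eta_{0}$, not $\eta_{0}/\alpha$, that satisfies \eqref{R9}. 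Consequently the homogeneous solutions of \eqref{R3} are not themselves solutions of \eqref{R9}; they are of the form $\eta_{0}=b/\alpha=b\,z_{x}$ with $D_{t}b=0$. Your sentence ``so that $\eta_{0}$ ranges exactly over the solutions of \eqref{R9}'' is therefore false as written, and the later claim that $r-r_{*}$ solves \eqref{R9} fails for the same reason. The fix is mechanical --- multiply by $\alpha$ rather than divide --- but you should redo the bookkeeping with the correct sign of the integrating factor before asserting that \eqref{R12} captures the full affine space. (You may also notice, once you carry this through, that the additive term in \eqref{R12} as literally stated in the paper inherits the same issue; in your write-up you should make explicit that the homogeneous piece is $z_{x}$ times a \eqref{R9}-solution.)
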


In particular, the subset
\begin{eqnarray}
\mathcal{\tilde{R}} &=&\{r_{1}^{(1)}=\frac{v_{x}v^{3}}{6z^{3}}-\frac{%
u_{x}v^{2}}{2z^{2}}+\frac{u(uz-v^{2})z_{x}}{6z^{3}}+\frac{v}{z},\text{ }%
r_{1}^{(2)}=[(xv-u^{2}/2)/z]_{x},  \label{R18} \\
\text{ }r_{2} &=&\frac{v_{x}}{z_{x}}-\frac{u_{x}^{2}}{2z_{x}},r_{3}=(\frac{%
u_{x}^{3}}{6z_{x}^{2}}-\frac{u_{x}v_{x}}{2z_{x}^{2}}+\frac{3}{4z_{x}})/(%
\frac{u_{x}}{z_{x}}-\frac{v_{x}^{2}}{2z_{x}^{2}})\}\subset \mathcal{R}
\notag
\end{eqnarray}%
coincides exactly with that found in \cite{GPPP,GBPPP,PoP}.

\subsection{The generalized Riemann type hydrodynamical equation: the case
N=4}

Now consider the generalized Riemann type differential equation (\ref{H0a})
at $N=4$
\begin{equation}
D_{t}^{4}u=0  \label{A10a}
\end{equation}%
on an element $u\in \mathbb{R}\{\{x,t\}\}$ and construct the related
invariant Riemann differential ideal $R\mathcal{\{}u\mathcal{\}}\subset
\mathcal{K}\{u\}$ as follows:%
\begin{eqnarray}
R\mathcal{\{}u\mathcal{\}} &:&=\{\lambda ^{3}\sum_{n\in \mathbb{Z}%
_{+}}f_{n}^{(1)}D_{x}^{n}u-\lambda ^{2}\sum_{n\in \mathbb{Z}%
_{+}}f_{n}^{(2)}D_{t}D_{x}^{n}u+\lambda \sum_{n\in \mathbb{Z}%
_{+}}f_{n}^{(3)}D_{t}^{2}D_{x}^{n}u-  \label{A10} \\
-\sum_{n\in \mathbb{Z}_{+}}f_{n}^{(4)}D_{t}^{3}D_{x}^{n}u
&:&D_{t}^{4}u=0,\lambda \in \mathbb{R},f_{n}^{(k)}\in \mathcal{K}\{u\},k=%
\overline{1,4},n\in \mathbb{Z}_{+}\}  \notag
\end{eqnarray}%
at a fixed function $u\in \mathbb{R}\{\{x,t\}\}.$ The Riemann differential
ideal (\ref{A10}), satisfying the $D_{t}$-invariance condition, is in this
case also maximal. The corresponding kernel $Ker$ $D_{t}\subset R\mathcal{\{}%
u\mathcal{\}}$ of the differentiation $D_{t}:\mathcal{K}\{u\}\rightarrow
\mathcal{K}\{u\},$ reduced upon the Riemann differential ideal (\ref{A10}),
is given by the following linear differential relationships:
\begin{equation}
D_{t}f^{(1)}=0,\text{ }D_{t}f^{(2)}=\lambda f^{(1)},\text{ }%
D_{t}f^{(3)}=\lambda f^{(2)},\text{ }D_{t}f^{(4)}=\lambda f^{(3)},
\label{A11}
\end{equation}%
where $f^{(k)}:=f^{(k)}(\lambda )=\sum_{n\in \mathbb{Z}_{+}}f_{n}^{(k)}%
\lambda ^{n}\in \mathcal{K}\{u\}|_{\mathcal{M}_{(4)}}=\mathbb{R}\{\{x,t\}\},$
$k=\overline{1,4}$ and $\lambda \in \mathbb{R}$ is arbitrary. The linear
relationships (\ref{A11}) \ can be easily represented in the space $\mathbb{R%
}^{4}\{\{x,t\}\}$ in the following matrix form:%
\begin{equation}
D_{t}f=q(\lambda )f,\text{ \ }q(\lambda ):=\left(
\begin{array}{cccc}
0 & 0 & 0 & 0 \\
\lambda & 0 & 0 & 0 \\
0 & \lambda & 0 & 0 \\
0 & 0 & \lambda & 0%
\end{array}%
\right) ,  \label{A12}
\end{equation}%
where $f:=(f^{(1)},f^{(2)},f^{(3)},f^{(4)})^{\intercal }\in \mathbb{R}%
^{4}\{\{x,t\}\},$ and $\lambda \in \mathbb{R}.$\ \ Moreover, it is easy to
observe that relationships (\ref{A11}) can be equivalently rewritten in the
compact scalar form as
\begin{equation}
D_{t}^{4}f^{(4)}=0,  \label{A13}
\end{equation}%
where an element $f_{4}\in \mathcal{K}\{u\}.$ Thus, now one can construct
the invariant Lax differential ideal, isomorphically equivalent to (\ref{A10}%
), as follows:
\begin{eqnarray}
L\mathcal{\{}u\mathcal{\}} &:&=%
\{g_{1}f^{(4)}+g_{2}D_{t}f^{(4)}+g_{3}D_{t}^{2}f^{(4)}+g_{4}D_{t}^{3}f^{(4)}:D_{t}^{4}f^{(4)}=0,
\notag \\
g_{j} &\in &\mathcal{K}\{u\},j=\overline{1,4}\}\subset \mathcal{K}\{u\},
\label{A14}
\end{eqnarray}%
whose $D_{x}$-invariance should be checked separately. The latter gives rise
to the representation

\begin{equation}
D_{x}f=\ell \lbrack u,v,w,z;\lambda ]f,\text{ \ }\ell \lbrack
u,v,w,z;\lambda ]:=\left(
\begin{array}{cccc}
-\lambda ^{3}u_{x} & \lambda ^{2}v_{x} & -\lambda w_{x} & z_{x} \\
-4\lambda ^{2} & 3\lambda ^{3}u_{x} & -2\lambda ^{2}v_{x} & \lambda w_{x} \\
-10\lambda ^{5}r_{1} & 6\lambda ^{4} & -3\lambda ^{3}u_{x} & \lambda
^{2}v_{x} \\
-20\lambda ^{6}r_{2} & 10\lambda ^{5}r_{1} & -4\lambda ^{4} & \lambda
^{3}u_{x}%
\end{array}%
\right) ,  \label{A15}
\end{equation}%
where we put, by definition,
\begin{equation}
D_{t}u:=v,D_{t}v:=w,D_{t}w:=z,D_{t}z:=0,  \label{A15a}
\end{equation}%
$(u,v,w,z)^{\intercal }\in \mathcal{\tilde{M}}_{(4)}\subset \mathbb{R}%
^{3}\{\{x,t\}\},$ and the mappings $r_{j}:\mathcal{\tilde{M}}%
_{(4)}\rightarrow \mathbb{R}\{\{x,t\}\},j=\overline{1,2},$ satisfy the
following functional-differential equations:%
\begin{equation}
D_{t}r_{1}+r_{1}D_{x}u=1,\text{ \ \ \ \ \ \ }D_{t}r_{2}+r_{2}D_{x}u=r_{1},
\label{A16}
\end{equation}%
similar to (\ref{A8}), considered above. The equations (\ref{A16}) possess
many different solutions, amongst which are the functional expressions:%
\begin{eqnarray}
r_{1} &=&D_{x}(\frac{uw^{2}}{2z^{2}}-\frac{vw^{3}}{3z^{3}}+\frac{vw^{4}}{%
24z^{4}}+\frac{7w^{5}}{120z^{4}}-\frac{w^{6}}{144z^{5}}),  \label{A17} \\
r_{2} &=&D_{x}(\frac{uw^{3}}{3z^{3}}-\frac{vw^{4}}{6z^{4}}+\frac{3w^{6}}{%
80z^{5}}+\frac{vw^{5}}{120z^{5}}-\frac{w^{7}}{420z^{6}}).  \notag
\end{eqnarray}%
Whence, we obtain the following proposition.

\begin{proposition}
The expressions (\ref{A12}) and (\ref{A15}) are the linear matrix
representations in the space $\mathbb{R}^{4}\mathbb{\{\{}x,t\}\}$ of the
differentiations $D_{t}:\mathcal{K}\{u\}\rightarrow \mathcal{K}\{u\}$ \ and $%
D_{x}:\mathcal{K}\{u\}\rightarrow \mathcal{K}\{u\},\ $respectively, reduced
upon the invariant Lax differential ideal $L\mathcal{\{}u\mathcal{\}}\subset
\mathcal{K}\{u\}$ given by (\ref{A6}).
\end{proposition}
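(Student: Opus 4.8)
The plan is to verify that the two linear matrix equations
\[
D_tf=q(\lambda)f,\qquad D_xf=\ell[u,v,w,z;\lambda]f
\]
are compatible precisely on the manifold $\mathcal{M}_{(4)}$ cut out by $D_t^4u=0$, and that they encode the reduced action of $D_t$ and $D_x$ on the Lax differential ideal $L\{u\}$ in (\ref{A14}). First I would establish the $D_t$-part: by the same computation as for Lemma~\ref{Lm_1}, the kernel of $D_t$ reduced modulo the Riemann differential ideal (\ref{A10}) is generated by coefficient functions $f^{(k)}(\lambda)$ obeying (\ref{A11}), and collecting these four scalar relations into a vector gives exactly (\ref{A12}); the fact that the ideal (\ref{A10}) is $D_t$-invariant and maximal is what guarantees that no further relations appear. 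Then, since $D_t^4f^{(4)}=0$ is the scalar reformulation (\ref{A13}), the ideal $L\{u\}$ generated by $f^{(4)},D_tf^{(4)},D_t^2f^{(4)},D_t^3f^{(4)}$ is isomorphic to the Riemann ideal and is $D_t$-invariant by construction.

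Next I would treat the $D_x$-part, which is the substantive step. One imposes $D_x$-invariance of $L\{u\}$: writing a general element $g=\sum_{j}g_{j}D_t^{j-1}f^{(4)}$ and demanding $D_xg\in L\{u\}$, and using the commutator identity (\ref{A1}) in the form $[D_x,D_t]=(D_xu)D_x$ to push $D_x$ past the powers of $D_t$, one is forced into a determining equation for the matrix $\ell$ of the shape
\[
D_t\ell+\ell\,D_xu=[q(\lambda),\ell],
\]
exactly analogous to (\ref{A8a}) in the $N=3$ case. Solving this equation entrywise, with the ansatz that the spectral parameter enters through the powers of $\lambda$ displayed in (\ref{A15}) and that the two genuinely nonlocal entries are named $r_1$ and $r_2$, collapses the system to the two functional-differential equations (\ref{A16}); the homogeneous-in-$\lambda$ bookkeeping pins down all the numerical coefficients $-4,6,-3,10,-20,\dots$ and the staircase pattern of $u_x,v_x,w_x,z_x$. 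Finally one checks that the explicit expressions (\ref{A17}) indeed satisfy (\ref{A16}): apply $D_t$ using (\ref{A15a}), use that $D_t$ commutes past $D_x$ only up to the $(D_xu)D_x$ correction term, and verify the telescoping cancellation; this is a routine but lengthy verification that I would state rather than grind through, noting that the existence of such $r_1,r_2$ already follows abstractly from Theorem stated above for the $N=3$ analogue, specialized here.

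I expect the main obstacle to be the $D_x$-invariance computation for $L\{u\}$: one must carry the commutator correction $(D_xu)D_x$ through three successive applications of $D_t$, which produces a cascade of lower-order terms that have to recombine into the single nonlocal quantity structure of (\ref{A16}), and it is here that one learns both the precise powers of $\lambda$ and the need for exactly two auxiliary nonlocal functions rather than one. Once the determining equation $D_t\ell+\ell D_xu=[q(\lambda),\ell]$ is in hand, the remaining verification — that (\ref{A12}) and (\ref{A15}) together are equivalent to the dynamical system $D_t u=v$, $D_t v=w$, $D_t w=z$, $D_t z=0$ with the compatibility $D_xD_t=D_tD_x$ holding on $\mathcal{M}_{(4)}$ — follows the same template as the $N=3$ theorem and presents no conceptual difficulty. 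Hence the proposition: (\ref{A12}) and (\ref{A15}) are the sought adjoint linear matrix representations of $D_t$ and $D_x$ reduced modulo $L\{u\}$.
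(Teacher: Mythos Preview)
Your proposal is correct and follows essentially the same approach as the paper: derive the $D_t$-representation (\ref{A12}) from the kernel conditions (\ref{A11}) on the Riemann ideal, form the isomorphic Lax ideal (\ref{A14}), then impose $D_x$-invariance and use the commutator (\ref{A1}) to obtain the determining equation $D_t\ell+\ell\,D_xu=[q(\lambda),\ell]$, whose entrywise solution yields (\ref{A15}) together with the auxiliary constraints (\ref{A16}) on $r_1,r_2$. The paper does not give a separate proof of this proposition---it is stated as a summary of the construction carried out in the paragraphs immediately preceding it---so your outline is in fact more explicit than what appears in the text; note also that the reference to ``(\ref{A6})'' in the statement is a typographical slip for (\ref{A14}).
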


Based now on the representations (\ref{A12}) and (\ref{A15}) one easily
constructs a standard Lax type representation, characterizing the
integrability of the nonlinear dynamical system
\begin{equation}
\left.
\begin{array}{c}
u_{t}=v-uu_{x} \\
v_{t}=w-uv_{x} \\
w_{t}=z-uw_{x} \\
z_{t}=-uz_{x}%
\end{array}%
\right\} :=K[u,v,w,z],  \label{A18}
\end{equation}%
equivalent to the generalized Riemann type hydrodynamical system (\ref{A10a}%
). Namely, the following theorem holds.

\begin{theorem}
The dynamical system (\ref{A18}), equivalent to the generalized Riemann type
hydrodynamical system (\ref{A10a}), possesses the Lax type representation%
\begin{equation}
\begin{array}{c}
f_{x}=\ell \lbrack u,v,z,w;\lambda ]{f},\text{ \ \ }f_{t}=p(\ell )f,\text{ \
\ }p(\ell ):=-u\ell \lbrack u,v,w,z;\lambda ]+q(\lambda ),%
\end{array}
\label{A19a}
\end{equation}%
where $f\in \mathbb{R}^{4}\{\{x,t\}\},$ $\lambda \in \mathbb{R}$ is a
spectral parameter and
\begin{equation*}
\ell \lbrack u,v,w,z;\lambda ]:=\left(
\begin{array}{cccc}
-\lambda ^{3}u_{x} & \lambda ^{2}v_{x} & -\lambda w_{x} & z_{x} \\
-4\lambda ^{4} & 3\lambda ^{3}u_{x} & -2\lambda ^{2}v_{x} & \lambda w_{x} \\
-10\lambda ^{5}r_{1} & 6\lambda ^{4} & -3\lambda ^{3}u_{x} & \lambda
^{2}v_{x} \\
-20\lambda ^{6}r_{2} & 10\lambda ^{5}r_{1} & -4\lambda ^{4} & \lambda
^{3}u_{x}%
\end{array}%
\right) ,\text{ \ }q(\lambda ):=\left(
\begin{array}{cccc}
0 & 0 & 0 & 0 \\
\lambda & 0 & 0 & 0 \\
0 & \lambda & 0 & 0 \\
0 & 0 & \lambda & 0%
\end{array}%
\right) ,
\end{equation*}%
\begin{equation}
\text{ \ }p(\ell )=\left(
\begin{array}{cccc}
\lambda uu_{x} & -\lambda ^{2}uv_{x} & \lambda uw_{x} & -uz_{x} \\
\lambda +4\lambda ^{4}u & -3\lambda ^{3}uu_{x} & 2\lambda ^{2}uv_{x} &
-\lambda uw_{x} \\
10\lambda ^{5}ur_{1} & \lambda -6\lambda ^{4}u & 3\lambda ^{3}uu_{x} &
-\lambda ^{2}uv_{x} \\
20\lambda ^{6}ur_{2} & -10\lambda ^{5}ur_{1} & \lambda +4\lambda ^{4}u &
-\lambda ^{3}uu_{x}%
\end{array}%
\right) ,  \label{A19}
\end{equation}%
so it is a Lax type integrable dynamical system on the functional manifold $%
\mathcal{\tilde{M}}_{(4)}.$
\end{theorem}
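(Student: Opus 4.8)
The plan is to reduce the statement to the verification of one matrix determining equation, exactly as was done in the case $N=3$, and then to check that equation componentwise on the manifold $\mathcal{\tilde{M}}_{(4)}$. Saying that (\ref{A19a})--(\ref{A19}) is a Lax type representation means that the overdetermined linear system $f_x=\ell[u,v,w,z;\lambda]f$, $f_t=p(\ell)f$ is compatible, i.e. that the zero-curvature condition $\partial_t\ell-\partial_x p(\ell)+[\ell,p(\ell)]=0$ holds on solutions of (\ref{A18}). First I would insert $p(\ell)=-u\ell+q(\lambda)$ and rewrite the $t$-derivative through $D_t=\partial/\partial t+u\,\partial/\partial x$; since $D_xq(\lambda)=0$ and $[\ell,\ell]=0$, the zero-curvature condition collapses to the $N=4$ analog of (\ref{A8a}),
\begin{equation*}
D_t\ell+\ell\,D_xu=[q(\lambda),\ell],
\end{equation*}
so the whole problem becomes the verification that the explicit matrix (\ref{A15}) satisfies this relation whenever $D_tu=v$, $D_tv=w$, $D_tw=z$, $D_tz=0$ (equivalently, on (\ref{A18})), with the commutator identity (\ref{A1}) used to evaluate derivatives of $u_x,v_x,w_x,z_x$.

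Next I would invoke the structural facts already established. The Riemann differential ideal (\ref{A10}) is $D_t$-invariant and maximal, so the reduced $D_t$-kernel is described by the constant matrix $q(\lambda)$ of (\ref{A12}); transporting this to the isomorphic Lax differential ideal (\ref{A14}) and imposing its $D_x$-invariance forces $D_x$ on the reduced ring to act by a matrix of the shape (\ref{A15}), the only undetermined data being the two scalar mappings $r_1,r_2:\mathcal{\tilde{M}}_{(4)}\to\mathbb{R}\{\{x,t\}\}$ sitting in the lower-triangular positions. Carrying out the componentwise comparison in the displayed determining equation, and using (\ref{A1}) to get $D_tu_x=v_x-u_x^2$, $D_tv_x=w_x-u_xv_x$, $D_tw_x=z_x-u_xw_x$, $D_tz_x=-u_xz_x$, one finds that the diagonal and superdiagonal entries hold identically, while the three entries carrying $r_1$ and $r_2$ reproduce precisely the functional-differential equations (\ref{A16}): $D_tr_1+r_1D_xu=1$ and $D_tr_2+r_2D_xu=r_1$. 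Hence the determining equation, and with it the zero-curvature condition, holds if and only if $r_1$ and $r_2$ solve (\ref{A16}).

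It then remains to supply such $r_1,r_2$ and to record that $p(\ell)$ has the stated form. For the former I would verify directly that the functional expressions (\ref{A17}) satisfy (\ref{A16}); this is a routine but lengthy computation using only the substitution rules $D_tu=v,\dots,D_tz=0$ and the quotient rule, and it is shortened by the fact that each $r_j$ is written there as a total $x$-derivative of an explicit density in $u,v,w,z$. For the latter, $p(\ell)=-u\ell+q(\lambda)$ is simply $-u$ times the matrix (\ref{A15}) added to $q(\lambda)$, which reproduces (\ref{A19}) after collecting terms. Putting everything together, the zero-curvature condition holds identically in $\lambda\in\mathbb{R}$, so (\ref{A19a}) is a genuine Lax type representation of (\ref{A18}); since (\ref{A18}) is the dynamical-system form of (\ref{A10a}), the generalized Riemann type equation at $N=4$ is Lax type integrable on $\mathcal{\tilde{M}}_{(4)}$.

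I expect the principal obstacle to lie in the bookkeeping of the second step: keeping track of the many $\lambda$-graded components of the matrix identity, making sure that the numerical coefficients appearing in (\ref{A15}) are exactly those forced by the commutator $[q(\lambda),\ell]$, and confirming that the $r_1$- and $r_2$-entries generate no constraint beyond (\ref{A16}). The explicit check that (\ref{A17}) solves (\ref{A16}) is conceptually trivial but is the most calculation-heavy ingredient, and is where an oversight would be easiest to make.
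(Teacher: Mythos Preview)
Your proposal is correct and follows essentially the same route as the paper: the paper does not supply a formal proof of this theorem but presents it as an immediate consequence of the representations (\ref{A12}) and (\ref{A15}) together with the determining equation of type (\ref{A8a}) and the auxiliary constraints (\ref{A16})--(\ref{A17}); your write-up is precisely the detailed verification of that compatibility, carried out exactly as the paper's construction dictates.
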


The result obtained above can be easily generalized on the case of an
arbitrary integer $N\in \mathbb{Z}_{+},$ thereby proving the Lax type
integrability of the whole hierarchy of the Riemann type hydrodynamical
equation (\ref{H0a}). The related calculations will be presented and
discussed in other work. Here we only do the next remark.

\begin{remark}
The Riemann type hydrodynamical equation (\ref{H0a}) as $N\rightarrow \infty
$ can be equivalently rewritten as the following Benney type \cite%
{Ben,KM,PBB} chain
\begin{equation}
D_{t}u^{(n)}=u^{(n+1)},\text{ \ \ }D_{t}:=\partial /\partial
t+u^{(0)}\partial /\partial x,  \label{A20}
\end{equation}%
for the suitably constructed moment functions $%
u^{(n)}:=D_{t}^{n}u^{(0)},u^{(0)}:=u\in \mathbb{R}\{\{x,t\}\},$ $n\in
\mathbb{Z}_{+}.$
\end{remark}

This aspect of the problem is very interesting and we plan to treat it in
detail by means of the differential-geometric tools elsewhere.

\bigskip

\section{\protect\bigskip The differential-algebraic analysis of the Lax
type integrability of the Korteweg-de Vries dynamical system}

\subsection{The differential-algebraic problem setting}

We consider the well known Korteweg-de Vries equation in the following (\ref%
{A1-2}) differential-algebraic form:

\begin{equation}
D_{t}u-D_{x}^{3}u=0,  \label{K1}
\end{equation}%
where $u\in \mathcal{K}\{u\}$ and the differentiations $D_{t}:=\partial
/\partial t+u\partial /\partial x,D_{x}:=\partial /\partial x$ satisfy the
commutation condition (\ref{A1}):
\begin{equation}
\lbrack D_{x},D_{t}]=(D_{x}u)D_{x}.  \label{K2}
\end{equation}%
We will also interpret relationship (\ref{K1}) as a nonlinear dynamical
system
\begin{equation}
D_{t}u=D_{xxx}u  \label{K1a}
\end{equation}%
on a suitably chosen functional manifold $\mathcal{M}$ $\subset $ $\mathbb{R}%
\{\{x,t\}\}.$

Based on the expression (\ref{K1}) we can easily construct a suitable
invariant KdV-differential ideal $KdV\mathcal{\{}u\mathcal{\}}\subset
\mathcal{K}\{u\}$ as follows:%
\begin{eqnarray}
KdV\mathcal{\{}u\mathcal{\}} &:&=\{\sum_{k=\overline{0,2}}\sum_{n\in \mathbb{%
Z}_{+}}f_{n}^{(k)}D_{x}^{k}D_{t}^{n}u\in \mathcal{K}\{u\}:\text{ }%
D_{t}u-D_{x}^{3}u=0,  \notag \\
f_{n}^{(k)} &\in &\mathcal{K}\{u\},k=\overline{0,2},n\in \mathbb{Z}%
_{+}\}\subset \mathcal{K}\{u\}.  \label{K3}
\end{eqnarray}%
The ideal (\ref{K3}) proves to be not maximal, that seriously influences on
the form of the reduced modulo it representations of derivatives $\ D_{x}$
and $D_{t}:\mathcal{K}\{u\}\rightarrow \mathcal{K}\{u\}.$ \ As the next step
we need to find the kernel $Ker$ $D_{t}\subset KdV\mathcal{\{}u\mathcal{\}}$
of the differentiation $D_{t}:\mathcal{K}\{u\}\rightarrow \mathcal{K}\{u\},$
reduced upon the KdV-differential ideal (\ref{K3}). \ We obtain by means of
easy calculations that it is generated by the following differential
relationships:%
\begin{eqnarray}
D_{t}f^{(0)} &=&-\lambda f^{(0)},\text{ \ }D_{t}f^{(2)}=-\lambda f^{(2)}%
\text{\ }+2f^{(2)}D_{x}u,  \notag \\
D_{t}f^{(1)} &=&-\lambda f^{(1)}\text{\ }+f^{(1)}D_{x}u+f^{(2)}D_{xx}u,
\label{K4}
\end{eqnarray}%
2where, by definition, $\ f^{(k)}:=f^{(k)}(\lambda )=\sum_{n\in \mathbb{Z}%
_{+}}f_{n}^{(k)}\lambda ^{n}\in \mathcal{K}\{u\}|_{\mathcal{M}}=\mathbb{R}%
\{\{x,t\}\},k=\overline{0,2},$ and $\lambda \in \mathbb{R}$ $\ $ is an
arbitrary parameter. Based on the relationships (\ref{K4}) the following
proposition holds.

\begin{proposition}
The differential relationships (\ref{K4}) can be equivalently rewritten in
the following linear matrix form:%
\begin{equation}
D_{t}f=q(\lambda )f,\text{ \ }q(\lambda ):=\left(
\begin{array}{cc}
D_{x}u-\lambda  & D_{xx}u \\
0 & 2D_{x}u-\lambda
\end{array}%
\right) ,  \label{K5}
\end{equation}%
where\ $f:=(f_{1},f_{2})^{\intercal }\in \mathbb{R}^{2}{\{\{}x,t\}\},$ $%
\lambda \in \mathbb{R},$ giving rise to the corresponding linear matrix
representation in the space $\mathbb{R}^{2}\{\{x,t\}\}$ of the
differentiation $D_{t}:\mathcal{K}\{u\}\rightarrow \mathcal{K}\{u\},$
reduced upon the KdV-differential ideal (\ref{K3}).
\end{proposition}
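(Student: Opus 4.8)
The plan is to obtain (\ref{K5}) in two moves: first re-derive the scalar relations (\ref{K4}) as the exact conditions for an element of the ideal (\ref{K3}) to lie in $Ker\,D_{t}$, and then recognise (\ref{K5}) as their matrix repackaging. For the first move I would take a generic element $\varphi=\sum_{k=0}^{2}\sum_{n\in\mathbb{Z}_{+}}f_{n}^{(k)}D_{x}^{k}D_{t}^{n}u\in KdV\{u\}$ and compute $D_{t}\varphi$ by the Leibniz rule, obtaining $\sum_{k,n}(D_{t}f_{n}^{(k)})D_{x}^{k}D_{t}^{n}u+\sum_{k,n}f_{n}^{(k)}D_{t}D_{x}^{k}D_{t}^{n}u$. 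The second sum is not yet in the normal form of an ideal element, so I would push $D_{t}$ to the right of $D_{x}^{k}$ using the commutation law (\ref{K2}), which iterates to the operator identities
\[
D_{t}D_{x}=D_{x}D_{t}-(D_{x}u)D_{x},\qquad D_{t}D_{x}^{2}=D_{x}^{2}D_{t}-2(D_{x}u)D_{x}^{2}-(D_{xx}u)D_{x},
\]
whence $D_{t}D_{x}^{k}D_{t}^{n}u=D_{x}^{k}D_{t}^{n+1}u$ plus correction terms of strictly lower $D_{x}$-order with coefficients that are $D_{x}$-derivatives of $u$. Since no $D_{x}$-power above $2$ is produced, $D_{t}\varphi$ is again of the form $\sum_{k=0}^{2}\sum_{n}g_{n}^{(k)}D_{x}^{k}D_{t}^{n}u$, the defining constraint (\ref{K1}) being needed only to make sense of the quotient $\mathcal{K}\{u\}|_{\mathcal{M}}=\mathbb{R}\{\{x,t\}\}$ in which the generating functions $f^{(k)}(\lambda)=\sum_{n}f_{n}^{(k)}\lambda^{n}$ live.

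For the second move I would impose $D_{t}\varphi\equiv 0$ modulo $KdV\{u\}$, i.e. annihilate the total coefficient of each $D_{x}^{k}D_{t}^{n}u$. The index shift $D_{t}^{n}u\mapsto D_{t}^{n+1}u$ produced above turns, after passing to the generating functions $f^{(k)}(\lambda)$, into multiplication by $\lambda$; and because a $D_{t}$-shift acting inside a level-$k$ term can only spill over into levels $\le k$, the $k=2$ coefficient couples only to itself with weight $2D_{x}u$ (from the $-2(D_{x}u)D_{x}^{2}$ term), the $k=1$ coefficient couples to itself with weight $D_{x}u$ and additionally receives the $k=2$ coefficient with weight $D_{xx}u$ (from the $-(D_{xx}u)D_{x}$ term), while the $k=0$ coefficient decouples entirely. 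Moving the $\lambda$-shift term to the right-hand side turns these three equations into precisely (\ref{K4}).

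Then the proposition is immediate: discard the scalar block $D_{t}f^{(0)}=-\lambda f^{(0)}$, which does not feed the other two components, set $f:=(f^{(1)},f^{(2)})^{\intercal}$, and note that the remaining relations of (\ref{K4}),
\[
D_{t}f^{(1)}=(D_{x}u-\lambda)f^{(1)}+(D_{xx}u)f^{(2)},\qquad D_{t}f^{(2)}=(2D_{x}u-\lambda)f^{(2)},
\]
are the row-by-row expansion of $D_{t}f=q(\lambda)f$ with $q(\lambda)$ as stated; the converse reading is trivial, and since (\ref{K4}) generates $Ker\,D_{t}$ reduced modulo (\ref{K3}), this gives (\ref{K5}) as the required $2\times2$ matrix representation of $D_{t}$ on $\mathbb{R}^{2}\{\{x,t\}\}$. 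I expect the only genuinely delicate part to be the bookkeeping in the first two steps — in particular verifying that the non-maximality of (\ref{K3}) is exactly what forces the $u$-dependent entries $D_{x}u-\lambda$, $2D_{x}u-\lambda$, $D_{xx}u$ of $q(\lambda)$ instead of the constant nilpotent matrices obtained in the Riemann cases — so I would run the iterated commutator reductions while keeping an explicit ledger of which $D_{x}$-level each term lands on.
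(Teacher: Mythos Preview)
Your proposal is correct and follows the same route the paper takes, only in far greater detail: the paper obtains (\ref{K4}) ``by means of easy calculations'' and then simply asserts the $2\times 2$ matrix rewriting, whereas you actually carry out the commutator bookkeeping $D_{t}D_{x}^{k}=D_{x}^{k}D_{t}+\cdots$ and the passage to generating functions that underlie those calculations. Your identification of $(f_{1},f_{2})$ with $(f^{(1)},f^{(2)})$ after discarding the decoupled $f^{(0)}$ block is exactly the intended reading.
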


\subsection{The Lax type representation}

Now, making use of the matrix differential relationship (\ref{K5}), we can
construct the Lax differential ideal related to the ideal (\ref{K3})
\begin{eqnarray}
L\mathcal{\{}u\mathcal{\}} &:&=\{<g,f>_{\mathbb{E}^{2}}\in \mathcal{K}%
\{u\}:D_{t}f=q(\lambda )f,\text{ }  \notag \\
\text{\ }f,g &\in &\mathcal{K}^{2}\{u\}\text{ }\}\subset \mathcal{K}\{u\},
\label{K6}
\end{eqnarray}%
where $<\cdot ,\cdot >_{\mathbb{E}^{2}}$ denotes the standard scalar product
in the Euclidean real space $\ \mathbb{E}^{2}.$ Since the Lax differential
ideal (\ref{K6}) is, by construction, $D_{t}$-invariant and isomorphic to
the $D_{t}$- and $D_{x}$-invariant KdV-differential ideal (\ref{K3}), it is
necessary to check its $D_{x}$-invariance. As a result of this condition the
following differential relationship
\begin{equation}
D_{x}f=\ell \lbrack u;\lambda ]f,\text{ }\ell \lbrack u;\lambda ]:=\left(
\begin{array}{cc}
D_{x}\tilde{a} & 2D_{xx}\tilde{a}. \\
-1 & -D_{x}\tilde{a}%
\end{array}%
\right)  \label{K7}
\end{equation}%
holds, where the mapping $\tilde{a}:\mathcal{M}\rightarrow \mathbb{R}%
\{\{x,t\}\}$ satisfies the functional-differential relationships
\begin{equation}
D_{t}\tilde{a}=1,D_{t}u-D_{x}^{3}u=0,  \label{K7a}
\end{equation}%
and the matrix $\ell :=\ell \lbrack u;\lambda ]:\mathbb{R}%
^{2}\{\{x,t\}\}\rightarrow \mathbb{R}^{2}\{\{x,t\}\}$ satisfies for all $%
\lambda \in \mathbb{R}$ the determining functional-differential equation%
\begin{equation}
D_{t}\ell +\ell D_{x}u=[q(\lambda ),\ell ]+D_{x}q(\lambda ),  \label{K8}
\end{equation}%
generalizing the similar equation (\ref{A8a}). The result obtained above we
formulate as the following proposition.

\begin{theorem}
The derivatives $D_{t}:\mathbb{R}\{\{x,t\}\}\rightarrow \mathbb{R}\{\{x,t\}\}
$ and $D_{x}:\mathcal{K}\{u\}\rightarrow \mathcal{K}\{u\}$ of the
differential ring $\mathcal{K}\{u\},$ reduced upon the Lax differential
ideal $L\mathcal{\{}u\mathcal{\}}\subset \mathcal{K}\{u\},$ which isomorphic
to the KdV-differential ideal $KdV\mathcal{\{}u\mathcal{\}}\subset \mathcal{K%
}\{u\},$ \ allow the compatible Lax type representation (generated by the
invariant Lax differential ideal $L\mathcal{\{}u\mathcal{\}}\subset \mathcal{%
K}\{u\}$)
\begin{eqnarray}
D_{t}f &=&q(\lambda )f,\text{ \ }q(\lambda ):=\left(
\begin{array}{cc}
D_{x}u-\lambda  & D_{xx}u \\
0 & 2D_{x}u-\lambda
\end{array}%
\right) ,  \notag \\
D_{x}f &=&\ell \lbrack u;\lambda ]f,\text{ }\ell \lbrack u;\lambda ]:=\left(
\begin{array}{cc}
D_{x}\tilde{a} & 2D_{xx}\tilde{a}. \\
-1 & -D_{x}\tilde{a}%
\end{array}%
\right) ,  \label{K9}
\end{eqnarray}%
where the mapping $\tilde{a}:\mathcal{M}\rightarrow \mathbb{R}\{\{x,t\}\}$
satisfies the functional-differential relationships (\ref{K7a}), $f\in
\mathbb{R}^{2}\{\{x,t\}\}\ $and $\lambda \in \mathbb{R}.$
\end{theorem}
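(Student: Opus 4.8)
The plan is to follow, step by step, the construction already used for the Riemann cases: build the $D_t$-representation from the kernel of $D_t$ modulo the $KdV$-differential ideal, dualize it into the Lax differential ideal, and then impose $D_x$-invariance to pin down the $D_x$-representation; compatibility will then be automatic. First I would record that the ideal $KdV\{u\}\subset\mathcal{K}\{u\}$ of (K3) is $D_x$- and $D_t$-invariant by construction: each generator is a $\mathcal{K}\{u\}$-combination of the monomials $D_x^kD_t^nu$ with $k=\overline{0,2}$, and using the defining relation $D_tu=D_x^3u$ together with the commutator rule (K2) one rewrites $D_x$ or $D_t$ applied to such a generator again inside the ideal — in particular every $D_x^3D_t^nu$ is re-expressed through $D_t^{n+1}u$ plus terms of lower $D_x$-order produced by (K2). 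By the preceding Proposition, the kernel $\mathrm{Ker}\,D_t$ reduced modulo $KdV\{u\}$ is then governed by the scalar relations (K4), i.e. in $\mathbb{R}^2\{\{x,t\}\}$ by $D_tf=q(\lambda)f$ with $q(\lambda)$ as in (K5); this is the asserted matrix representation of the reduced $D_t$.

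Second, I would introduce the Lax ideal (K6) as the $\mathcal{K}\{u\}$-span of the pairings $\langle g,f\rangle_{\mathbb{E}^2}$ with $f$ subject to $D_tf=q(\lambda)f$ and $g\in\mathcal{K}^2\{u\}$ arbitrary. By construction $L\{u\}$ is $D_t$-invariant and is isomorphic, as a differential ideal, to $KdV\{u\}$. The substantive point is $D_x$-invariance of $L\{u\}$: computing $D_x\langle g,f\rangle=\langle D_xg,f\rangle+\langle g,D_xf\rangle$ and demanding that the result stay in $L\{u\}$ forces $D_xf$ to be itself a $\mathcal{K}\{u\}$-linear combination of the components of $f$, say $D_xf=\ell[u;\lambda]f$. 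Consistency of $D_t(D_xf)=D_x(D_tf)$ through the identity $D_tD_xf-D_xD_tf=-(D_xu)D_xf$ (a restatement of (K2)) then yields the determining equation $D_t\ell+\ell D_xu=[q(\lambda),\ell]+D_xq(\lambda)$, which is (K8); the inhomogeneous term $D_xq(\lambda)$ — absent in (A8a) — is exactly the imprint of the non-maximality of $KdV\{u\}$ flagged in the text. Solving (K8) entrywise, using that the entries of $q(\lambda)$ in (K5) are built from $D_xu,D_{xx}u$ and that $D_xq(\lambda)$ involves $D_{xxx}u$, which on the manifold $\mathcal{M}$ equals $D_tu$, one finds that $\ell$ is forced into the shape (K7), the only free datum being a scalar $\tilde a:\mathcal{M}\to\mathbb{R}\{\{x,t\}\}$ constrained by $D_t\tilde a=1$ together with the KdV equation, i.e. the relations (K7a). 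I would confirm (K8) by direct substitution of (K7) and (K5), reducing with $D_t\tilde a=1$ and $D_tu=D_x^3u$.

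Finally, the zero-curvature (compatibility) condition for the pair (K9) is, by the computation just described, equivalent to (K8) holding together with (K2); since both hold, $D_xf=\ell f$, $D_tf=q(\lambda)f$ is a genuine compatible Lax type representation, and eliminating $f$ returns the dynamical system (K1a). I expect the main obstacle to be precisely the entrywise analysis of the determining equation (K8): one must show not only that (K7) solves it but that $D_x$-invariance of $L\{u\}$ \emph{forces} this shape up to the scalar gauge freedom in $\tilde a$, and one must verify that $D_t\tilde a=1$ is compatible with $D_tu=D_x^3u$ so that such an $\tilde a$ genuinely exists on a suitable functional manifold $\mathcal{M}$. The extra term $D_xq(\lambda)$, which has no analogue in the Riemann cases, is what makes this step more delicate here.
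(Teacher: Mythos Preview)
Your proposal is correct and follows essentially the same route as the paper: the argument there is precisely the construction preceding the theorem, namely forming the $KdV$-ideal (K3), reading off the reduced $D_t$-representation (K4)--(K5), passing to the Lax ideal (K6), and then imposing $D_x$-invariance to obtain (K7) together with the determining equation (K8) and the constraint (K7a) on $\tilde a$. Your derivation of (K8) from the commutator identity (K2) and your remark that the inhomogeneous term $D_xq(\lambda)$ reflects the non-maximality of $KdV\{u\}$ are exactly in line with the paper's reasoning, and in fact spell out a step the paper leaves implicit.
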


It is interesting to mention that the Lax type representation (\ref{K9})
strongly differs from that given by the well known \cite{Nov} classical
expressions%
\begin{eqnarray}
D_{t}f &=&q_{cl}(\lambda )f,\text{ \ }q_{cl}(\lambda ):=\left(
\begin{array}{cc}
D_{x}u/6 & -(2u/3-4\lambda ) \\
\begin{array}{c}
D_{xx}u/6-(u/6-\lambda )\times \\
\times (2u/3-4\lambda )%
\end{array}
& -11D_{x}u/6%
\end{array}%
\right) ,  \notag \\
D_{x}f &=&\ell _{cl}[u;\lambda ]f,\text{ }\ell _{cl}[u;\lambda ]:=\left(
\begin{array}{cc}
0 & 1 \\
u/6-\lambda & 0%
\end{array}%
\right) ,  \label{K10}
\end{eqnarray}%
where, as above, the following functional-differential equation (equivalent
to the nonlinear dynamical system (\ref{K1a}) on the functional manifold $%
\mathcal{M)}$
\begin{equation}
D_{t}\ell _{cl}+\ell _{cl}D_{x}u=[q_{cl}(\lambda ),\ell
_{cl}]+D_{x}q_{cl}(\lambda ),  \label{K11}
\end{equation}%
holds for $\ $any $\lambda \in \mathbb{R}.$ This fact, as we suspect, is
related with the existence of different $D_{t}$-invarinat KdV-differential
ideals of form (\ref{K3}), which are not maximal. Thus, a problem of
constructing a suitable KdV-differential ideal $KdV\mathcal{\{}u\mathcal{\}}$
$\subset \mathcal{K}\{u\}$ generating the corresponding invariant Lax type
differential ideal $L\mathcal{\{}u\mathcal{\}}\subset \mathcal{K}\{u\},$
invariant with respect to the differential representations (\ref{K10}),
naturally arises, and we expect to treat this in detail elsewhere. There
also is a very interesting problem of the differential-algebraic analysis of
the related symplectic structures on the functional manifold $\mathcal{M},$
with respect to which the dynamical system (\ref{K1a}) is Hamiltonian and
suitably integrable. Here we need also to mention a very interesting work
\cite{Wi}, where the integrability structure of the Korteweg-de Vries
equation was analyzed from the differential-algebraic point of view.

\section{Conclusion}

The results presented provide convincing evidence that the
differential-algebraic tools, when applied to a given set of differential
relationships based on the derivatives $D_{t}$ and $D_{x}:\mathcal{K}%
\{u\}\rightarrow \mathcal{K}\{u\}$ in the differential ring $\mathcal{K}%
\{u\} $ and parameterized by a fixed element $u\in \mathcal{K}\{u\},$ make
it possible to construct the corresponding Lax type representation as that
realizing \ the linear matrix representations of the derivatives reduced
modulo the corresponding invariant Riemann differential ideal. This scheme
was elaborated in detail for the generalized Riemann type differential
equation (\ref{H0a}) and for the classical Korteweg-de Vries equation (\ref%
{K1a}). As these equations are equivalent to the corresponding Hamiltonian
systems with respect to suitable symplectic structures, this aspect presents
a very interesting problem from the differential-algebraic point of view,
which we plan to study in the near future.\bigskip

\section{Acknowledgements}

The authors are much obliged to Profs. Zbignew Peradzy\'{n}ski (Warsaw
University, Poland), Jan S\l awianowski (IPPT, Poland) and Denis Blackmore
(NJIT, New Jersey, USA) for very valuable discussions. A.K.P. is cordially
thankful to Dr. Camilla Hollanti (Turku University, Finland) for the
invitation to take part in the "Cohomology Course-2010", to deliver a report
and for the nice hospitality. M.V.P. was in part supported by RFBR grant
08-01-00054 and a grant of the RAS Presidium "Fundamental \ Problems in
Nonlinear Dynamics". He also thanks the Wroclaw University for the
hospitality. The last but not least thanks go to Referees, whose many
instrumental comments, suggestions and remarks made the exposition strongly
improved.

\end{document}